\documentclass[11]{article}
\usepackage[latin1]{inputenc}
\usepackage{amsmath}
\usepackage{amstext,amsbsy}
\usepackage{epsfig,multicol}
\usepackage{amsthm}
\usepackage{amssymb,latexsym,amsfonts}
\usepackage{graphicx}
\usepackage{caption,subcaption}
\usepackage{url}

\setlength{\oddsidemargin}{0.0in} %
\setlength{\evensidemargin}{0.0in}%
\setlength{\textwidth}{6.2in}     %
\setlength{\textheight}{8.5in}    %
\setlength{\topmargin}{-0.6in}    %

\theoremstyle{plain}
\newtheorem{theorem}{Theorem}[section]

\newtheorem{corollary}[theorem]{Corollary}
\newtheorem{proposition}[theorem]{Proposition}

\theoremstyle{definition}

\newtheorem*{properties*}{Properties}

\newenvironment{definition*}[1][Definition]{\begin{trivlist}
\item[\hskip \labelsep {\bfseries #1}]}{\end{trivlist}}


\newcommand{\A}{\ensuremath{\textsc{a}}}
\newcommand{\C}{\ensuremath{\textsc{c}}}
\newcommand{\G}{\ensuremath{\textsc{g}}}
\newcommand{\U}{\ensuremath{\textsc{u}}}
\newcommand{\T}{\ensuremath{\textsc{t}}}

\newcommand{\poly}{\ensuremath{\mathcal{P}}}		
\newcommand{\nfan}{\ensuremath{\mathcal{N}(\mathcal{P})}}		
\newcommand{\sigs}{\ensuremath{\mathcal{S}}}		
\newcommand{\optsig}{\ensuremath{\mathcal{V}}}		
\newcommand{\R}{\ensuremath{\mathbb{R}}}		
\newcommand{\acpl}{\ensuremath{\mathcal{R}_{b_{0}}}}	
\newcommand{\sv}[1]{\ensuremath{\mathbf{#1}}}		
\DeclareMathOperator{\cone}{cone}			
\newtheorem{observation}[theorem]{Observation}

\begin{document}
\title{On the structure of RNA branching polytopes}
\author{Fidel} \author{Fidel Barrera-Cruz$^{1}$, Christine Heitsch$^1$, and Svetlana Poznanovi\'c$^2$ \\ [6pt]
$^{1}$ School of Mathematics\\
Georgia Institute of Technology \\ [6pt] 
$^{2}$ Department of Mathematical Sciences\\
Clemson University\\[5pt]
}
\date{}
\maketitle
\begin{abstract} The prevalent method for RNA secondary structure prediction for a single sequence is free energy minimization based on the nearest neighbor thermodynamic model (NNTM). One of the least well-developed parts of the model is the energy function assigned to the multibranch loops. Parametric analysis can be performed to elucidate the dependance of the prediction on the branching parameters used in the NNTM. Since the objective function is linear, this boils down to analyzing the normal fans of the \emph{branching polytopes}. Here we show that because of the way the multibranch loops are scored under the NNTM, certain branching patterns are possible for all sequences. We do this by characterizing the dominant parts of the parameter space obtained by looking at the relevant section of the normal fan;  therefore, we conclude that the structures that are normally found in nature are obtained for a relatively small set of parameters. \\
\end{abstract} 

 \noindent{\bf Keywords:}  RNA secondary structure, polytope, multibranch loops, parametric analysis\\

\noindent {\bf MSC Classification:}  92D20, 52B99

{{\renewcommand{\thefootnote}{} \footnote{\emph{E-mail addresses}:
fidelbc@math.gatech.edu (F. Barrera-Cruz), heitsch@math.gatech.edu (C.Heitsch), spoznan@clemson.edu (S.~Poznanovi\'c)}

\footnotetext[1]{C.H. was partially supported by a BWF CASIS. S.P. was partially supported by NSF DMS-1312817.} }

%
%
\section{Introduction} \label{intro}
%
%

RNA is a chain of four nucleotides, abbreviated \A, \C, \G, and \U\ (instead of \T), which form the familiar Watson-Crick pairings, similar to DNA. Traditionally, RNA has been thought of as an important nucleic acid that plays a role in the transcription of the genetic code stored in DNA and its translation into proteins. However, in the last few decades, it has been discovered that RNA also performs other critical biological functions, including gene splicing, editing, and regulation.  Knowing the structure of noncoding RNA molecules is critical to understanding and manipulating their cellular functions  and that is why the prediction of the RNA structure has been an important problem in computational biology in the recent past.

The structure of RNA is understood hierarchically. Unlike DNA, most of RNA is found in nature as a self-complementary single strand which folds onto itself by formation of intra-sequence base pairings. The nucleotide chain itself is thought of as the \emph{primary} structure, the intra-sequence base pairs form the \emph{secondary} structure, while the \emph{tertiary} structure includes more complex interactions, including pseudoknots and base triples. Determining the tertiary structure has been challenging, both experimentally and computationally, and hence a lot of attention has been devoted to the prediction of the secondary structure. The standard approach for single-sequence secondary structure prediction is free energy minimization~\cite{mathews2006prediction}  which uses a nearest-neighbor thermodynamic model (NNTM) with several hundreds of mostly experimentally determined parameters~\cite{turner2009nndb}. However, when these minimum free energy (MFE) predictions are compared to structures derived from information-theoretic means, the current gold-standard, the average accuracy for longer ribosomal RNA sequences is only 40\%~\cite{doshi2004evaluation}. Hence, it is critical to understand which aspects of RNA base pairing are not captured well by the NNTM.

We focus here on the part of the energy function which governs the branching of an RNA secondary structure, which is known to be a weakness of the current model. For computational reasons, the entropic cost is modeled as an affine function with three parameters. A very natural question to ask is: \emph{How does the optimal secondary structure depend on the branching loop parameters?} Methods from geometric combinatorics, specifically polytopes (which we termed \emph{branching polytopes}) and their normal fans, can be used to perform a full parametric analysis of the branching part of the NNTM. The computational framework, and proof-of-principle results, which give the first complete parametric analysis of the branching part of the NNTM for real RNA sequences were presented in~\cite{drellich2017geometric}. 

The branching polytopes depend on the RNA sequence and have hundreds of vertices and facets even for sequences of fewer than 100 nucleotides~\cite{inprep}, which makes it challenging to compare them in a biologically meaningful way. However, here we prove that certain dominant features are common to all of them under some mild conditions that seem true for RNA sequences. In particular, we show that, independently of the RNA sequence, the structures that are more likely to be obtained as optima, as parameters vary, have common characteristics and we characterize these structures completely.  As a result, we demonstrate that the dominant regions of the normal fans of the branching polytopes, which are common to all RNA sequences, are a consequence of the optimization problem and that the biologically more realistic secondary structures are less likely to be obtained as optima for a large set of parameters. 

Our results build nicely on a simplified model of RNA folding defined and analyzed by Hower and Heitsch~\cite{hower2011parametric}, where some of the same type of extremes were observed. However, this is the first parametric analysis of the NNTM for real RNA sequences, in which the energies of all of the motifs are included in the same way as they are computed in the free energy minimization used to predict secondary structures.

The paper is organized as follows. In Section~\ref{preliminaries} we give the preliminaries. We give a definition of a secondary structure and we explain the part of the NNTM used to score the branching loops. Then we define the branching polytopes. In Section~\ref{results} we characterize the dominant cones of the normal fan of the branching polytopes, where we specifically focus on the trade-off between the entropic cost of forming a branching loop and the stability of a helix branching off. We end with conclusions in the last section.

%
%
\section{Preliminaries} \label{preliminaries}
%
%

\textbf{Secondary structure.} An RNA secondary structure consists of runs of stacked base pairs (helices) separated by single-stranded regions (loops).  Mathematically it is a partial non-crossing matching, which means that when the RNA sequence is written on a circle and the base pairings are drawn  as straight lines, there are no crossing lines. The \emph{exterior loop} is the loop that contains the two ends of the RNA strand. In this paper we focus on the so called \emph{multibranch loops}, i.e., the loops that have at least 3 helices meeting it. The exterior loop is not considered to be a multibranch loop, regardless of how many helices are incident with it.  

There are three types of base pairs allowed in a secondary structure: the Watson-Crick pairs \A-\U\, \C-\G\, and the wobble pair \G-\U. Even with these restrictions,  there are multiple secondary structures for a given sequence (see Figure~\ref{fig:multiple} for an example of two possible structures for a tRNA from \textit{Synechococcus sp. WH 8102\footnote[1]{\G\C\C\C\C\C\A\U\C\G\U\C\U\A\G\A\G\G\C\C\U\A\G\G\A\C\A\C\C\U\C\C\C\U\U\U\C\A\C\G\G\A\G\G\C\G\A\C\A\G\G\G\G\U\U\C\G\A\A\U\C\C\C\C\U\U\G\G\G\G\G\U\A}} generated using~\cite{zuker2003mfold}); in fact the number of secondary structures grows exponentially with sequence length~\cite{stein1979some}. 

\begin{figure}[h]
\centering
\begin{multicols}{2}
\includegraphics[width=0.6\linewidth]{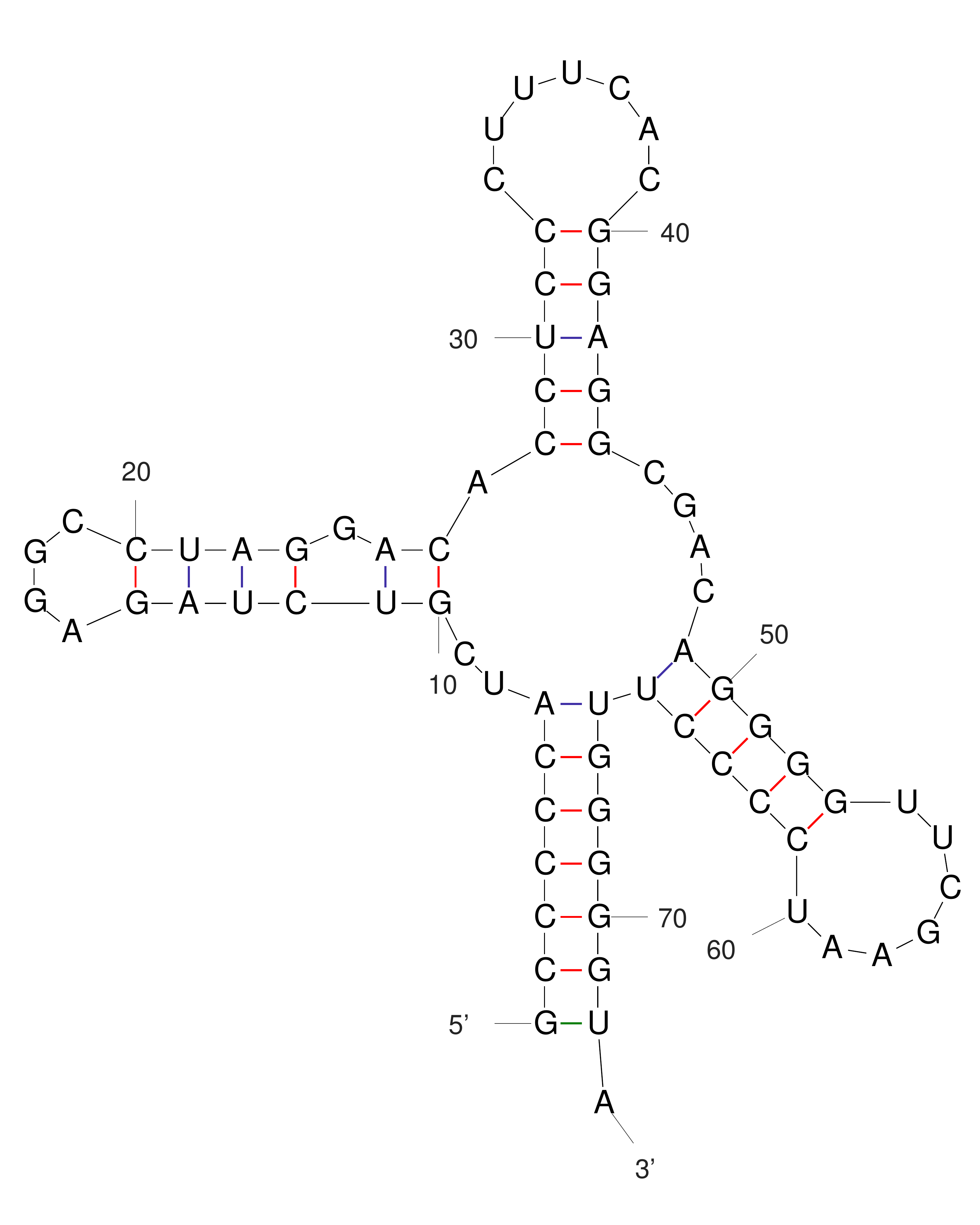}
\includegraphics[width=0.6\linewidth]{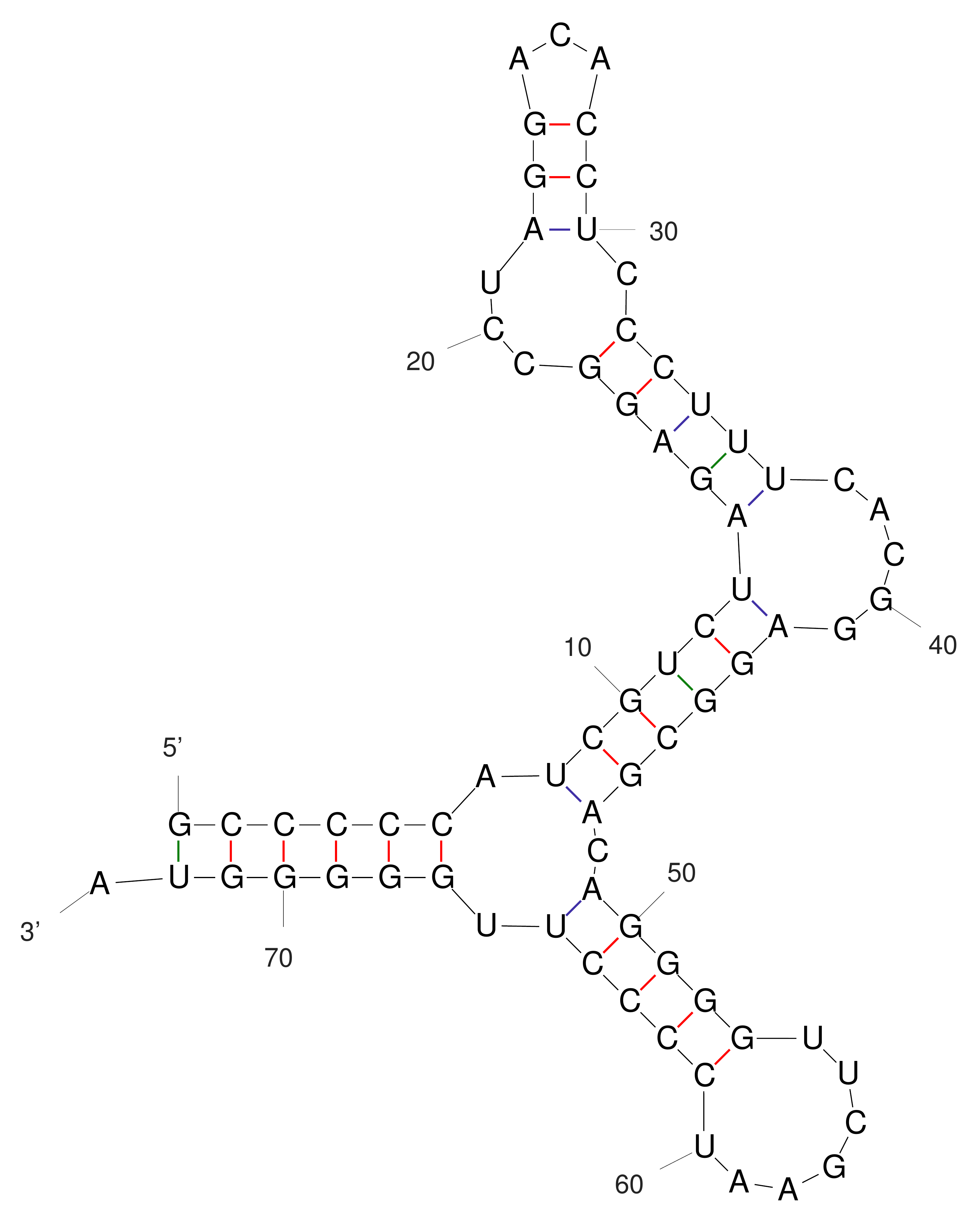}
\end{multicols}
\caption{Two of many possible structures for the same tRNA sequence.}
\label{fig:multiple}
\end{figure}

\textbf{Signatures.} The energy function used for MFE prediction has evolved substantially over the years~\cite{nussinov1978algorithms, zuker1981optimal, mathews1999expanded}, with a significant increase in the number of parameters. It is a sum of the energy of the helices and the loops; some of the contributions of the various kinds of substructures have been determined experimentally, others have been extrapolated from experimental data, and some have been learnt computationally.  In particular, in the Turner99 version of the NNTM~\cite{turner2009nndb}  the initiation term assigned to a multibranch loop is
\[ \Delta G_{\mathrm{initiation}} = a + b \cdot (\# \text{unpaired nucleotides}) + c \cdot (\# \text{branching helices}),\] where the values 
\begin{equation} \label{T99} a= 3.4, \;\; b=0 \; \; c = 0.4 \end{equation}
have been computationally determined in~\cite{mathews1999expanded}. The free energy of a secondary structure $S$ is calculated as 
\[ \Delta G_{S} = a \cdot x + b \cdot y + c \cdot z + w,\]
where $x$ is the number of multibranch loops in $S$, $y$ is the total number of single-stranded nucleotides in the multibranch loops in $S$, $z$ is the total number of helices incident with the multibranch loops in $S$ (a helix is counted twice if it is incident with two multibranch loops), and $w$ is the total sum of the energy contributions of the helices and the other kinds of loops in $S$. In light of this formula, for a given secondary structure, we define its \emph{signature} to be $(x,y,z,w)$ where $x$, $y$, $z$, $w$ are as described above. When focusing on the $x$ and $z$ coordinates, it is convenient to think of the reduced rooted plane tree representation of a secondary structure in which the nonbranching interior loops have been smoothed away. In such a representation, the root is the exterior loop, the internal nodes other than the root are the branching loops, and the leaves represent the hairpin loops. 

The signature is a map from the set of all secondary structures to $\mathbb{R}^{4}$, but even when the underlying RNA sequence is fixed, this map is not one-to-one.


\textbf{Branching polytope.} For a fixed sequence $s$ over the alphabet $\{\A, \C, \G, \U\}$, its \emph{branching polytope} \poly\ is the convex hull of all the set of signatures \sigs\ that correspond to secondary structures over $s$. We are interested in its vertices because they minimize the linear functional
\[ f_{a,b,c,d}(x,y,z,w) = a \cdot x + b \cdot y + c \cdot z + d \cdot w.\]
In particular, the  signature of the MFE structure for $s$ is a vertex of \poly\ as the optimum for $f_{a,b,c,1}$ over \sigs\ for the values of $a,b,c$ given in~\eqref{T99}. Computing $\mathrm{argmin}_{\sigs\ } f_{a,b,c,1}$ corresponds to calculating the signature of the MFE structure under an NNTM with modified multibranch parameters. While, as we said, the signature does not determine the structure completely, it contains information about its branching. The full dimensional cones of the normal fan \nfan\  of \poly\ correspond to the vertices of \poly. For a vertex $(x,y,z,w)$, we will denote by $\cone(x,y,z,w)$ the cone of parameters $(a,b,c,d)$ in \nfan\ such that $(x,y,z,w) = \mathrm{argmin}_{\sigs} f_{a,b,c,d}$. In particular, since in the NNTM we have $d=1$, we are interested in the $d=1$ slice of \nfan, which is a polyhedral subdivision of $\mathbb{R}^{3}$, and the vertices $(x,y,z,w)$ for which $\cone(x,y,z,w) \cap \{(a,b,c,d) :  d=1\}\ \neq \emptyset$.

In order to understand the trade-off between the cost $a$ of closing a multibranch loop  and the cost $c$ of starting a branching helix, we will consider the regions in 
\[ \acpl :=  \nfan\ \cap \{(a,b,c,d) \colon d=1, b=b_{0}\}.\] Figure~\ref{fig:0slices} illustrates $\mathcal{R}_{0}$ for several sequences: tRNA from \textit{Synechococcus sp. WH 8102}\footnotemark[1],  \textit{Oryza nivara}\footnote{\G\G\G\G\A\U\A\U\A\G\C\U\C\A\G\U\U\G\G\U\A\G\A\G\C\U\C\C\G\C\U\C\U\U\G\C\A\A\G\G\C\G\G\A\U\G\U\C\A\G\C\G\G\U\U\C\G\A\G\U\C\C\G\C\U\U\A\U\C\U\C\C\A}, \textit{uncultured Marinobacter sp.}\footnote{\G\G\U\C\U\G\U\A\G\C\U\C\A\G\G\U\G\G\U\U\A\G\A\G\C\G\C\A\C\C\C\C\U\G\A\U\A\A\G\G\G\U\G\A\G\G\U\C\G\G\U\G\G\U\U\C\A\A\G\U\C\C\A\C\C\C\A\G\A\C\C\C\A\C\C\A\G}, and a combinatorial sequence. The bounded regions, which are roughly around the origin, are not visible, and instead one can notice dominant unbounded regions. While the precise boundaries of the unbounded regions differ between the figures, all of them have 2 regions which dominate the first and third quadrant. These regions are separated by a sequence of \emph{unbounded stripes} in the second quadrant and a fan of \emph{unbounded wedges} in the fourth quadrant. In the next section we characterize the vertices of \poly\ which correspond to these unbounded regions.


\begin{figure}[ht]
\centering
\begin{multicols}{2}
    \includegraphics[width=0.6\linewidth]{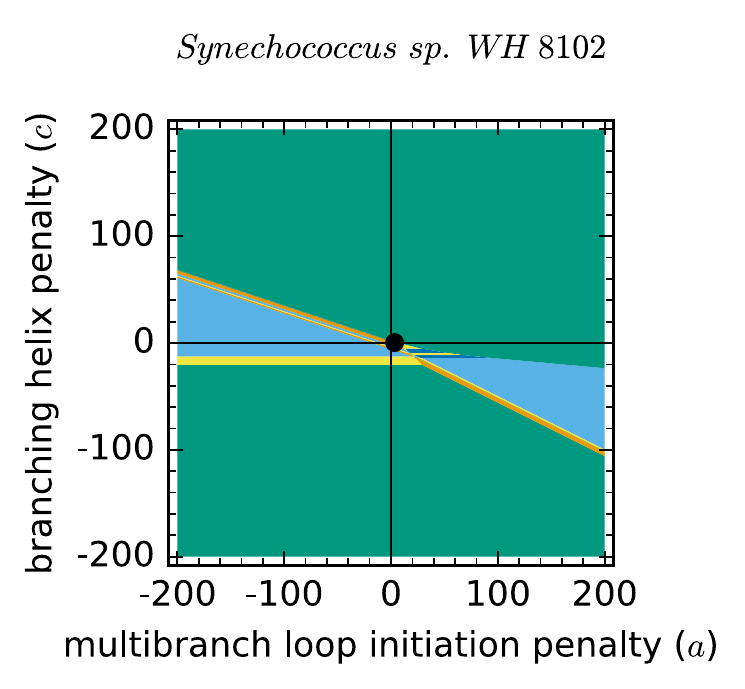} \subcaption{} 
    \includegraphics[width=0.6\linewidth]{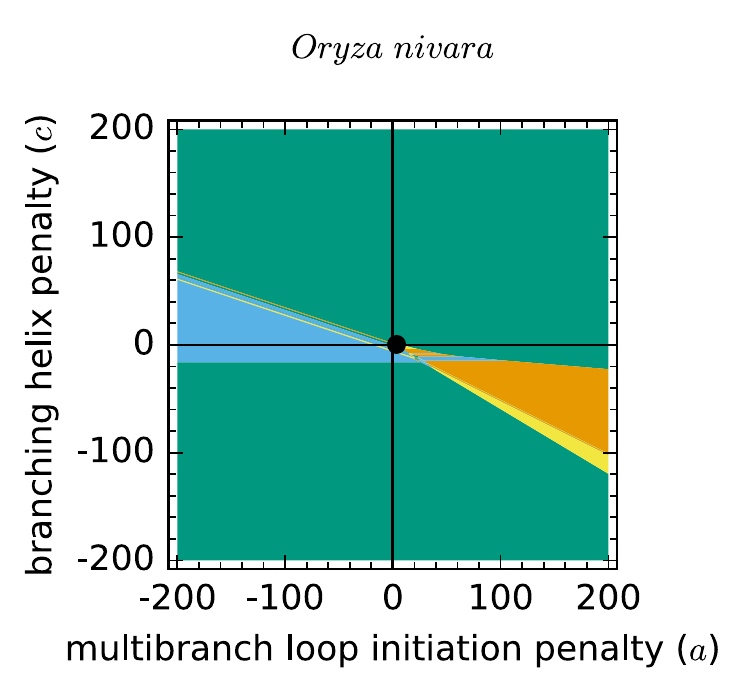} \subcaption{}
\end{multicols}
\begin{multicols}{2}
    \includegraphics[width=0.6\linewidth]{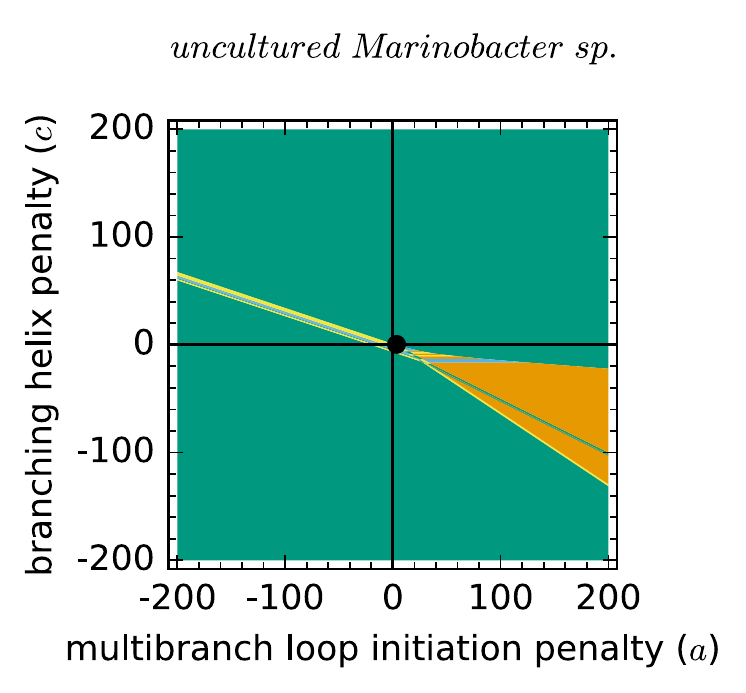} \subcaption{}
    \includegraphics[width=0.6\linewidth]{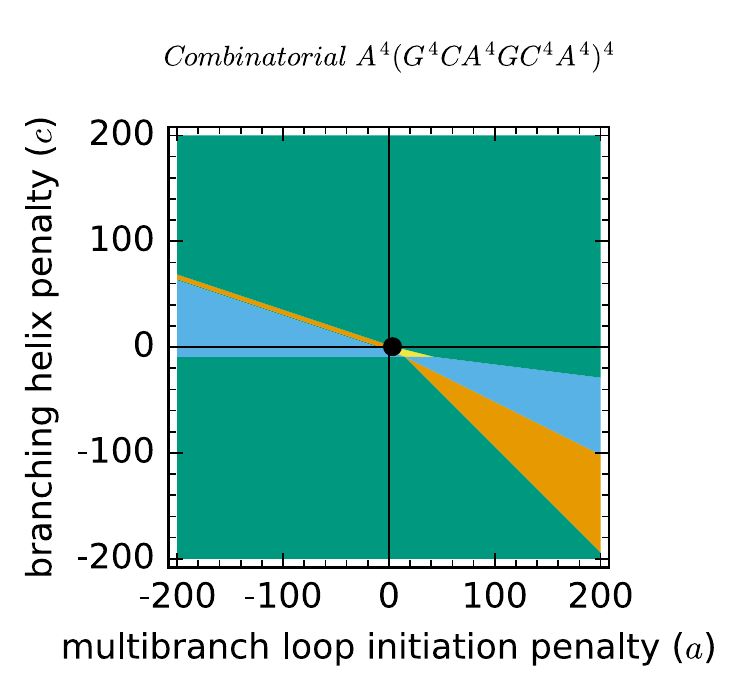} \subcaption{}
    \end{multicols}
\caption{The $\mathcal{R}_{0}$ slice of $\nfan$ for $3$ tRNA sequences and $1$ combinatorial sequence.}
\label{fig:0slices}
\end{figure}

%
%
\section{Characterization of the unbounded regions in \acpl}
\label{results}
%
%

In this section we characterize the vertices of \poly\ which correspond to the unbounded regions in \acpl. See Figure~\ref{fig: figexample} for an illustration of the outline of this section: in the $\mathcal{R}_{0}$ slice of a \textit{uncultured Thiotrichales bacterium} tRNA\footnote{\C\C\A\U\A\G\C\U\C\A\G\C\U\G\G\G\A\G\A\G\C\A\C\C\U\G\C\U\U\U\G\C\A\A\G\C\A\G\G\G\G\G\U\C\G\G\C\G\G\U\U\C\G\A\C\C\C\C\G\C\C\U\G\G\C\U\C\C\A\C\C\A\G}, each of the unbounded regions is labeled by the coordinates $(x,z)$ from the vertex $(x,y,z,w)$ which corresponds to that region.

\begin{figure}[hb]
\center
\includegraphics[width=7cm]{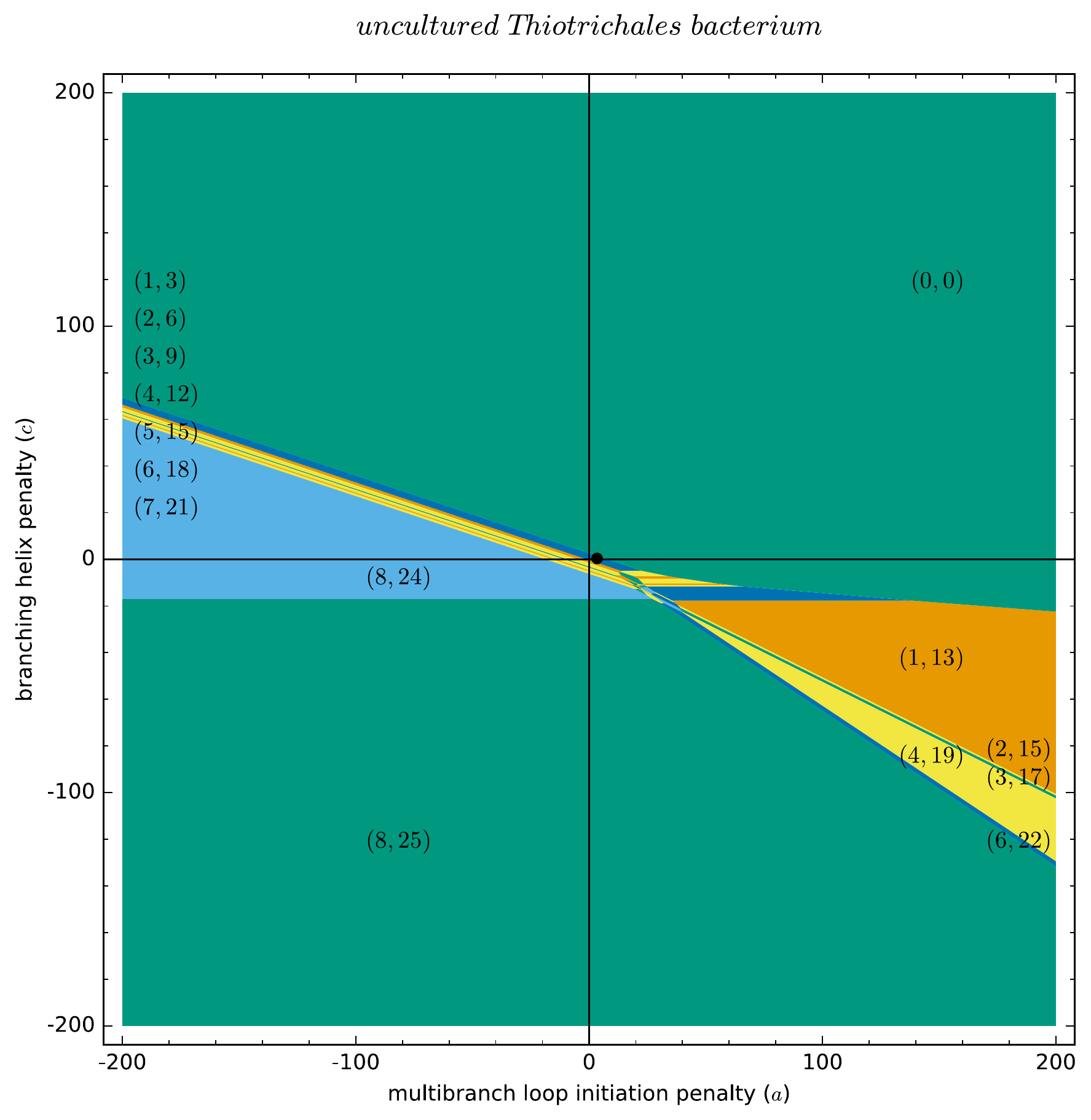}
\caption{The existence of the wedge labeled $(0,0)$ is explained by Proposition~\ref{zeroopt} and Corollary~\ref{ZeroRegion}. The wedge $(8,25)$ is explained by Corollary~\ref{maxwedge}. The wedge $(8,24)$ is explained by Proposition~\ref{xmaxzmin}. The unbounded regions $(1,3) -(7,21)$ in the NW quadrant are explained by Proposition~\ref{3xmin}. The wedge $(1,13)$ is explained by Proposition~\ref{1wedge}. Finally, the unbounded regions $(2,15) - (6,22)$ in the SE quadrant are explained by Theorem~\ref{SE}.
}
\label{fig: figexample}
\end{figure}

Let $s$ be a fixed sequence of length $n$ over the alphabet $\{\A, \C, \G, \U\}$, 
\sigs\ its set of branching signatures,
\poly\ the associated branching polytope, 
and \optsig\ the set of vertices of \poly. Let $\sv{v} = (x,y,z,w) \in \sigs$.
Let $\alpha = (a, b, c, d) \in \R^4$.
Then $\sv{v}$ is optimal for $\alpha$ if,
for all $\sv{v'} = (x',y',z',w') \in \sigs$, $\alpha \cdot (\sv{v} - \sv{v'}) \leq 0$.

Let $x_{max} = \max\{x : (x,y,z,w) \in \sigs\}$. Similarly, we define $x_{min}$, $y_{min}$, $y_{max}$, $z_{min}$, and $z_{maz}$.

\begin{proposition}
There exists $(x,y,z,w) \in \optsig$ such that $x = x_{max}$.
\label{asympmax}
\end{proposition}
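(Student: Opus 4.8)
The plan is to exploit the fact that $\poly = \mathrm{conv}(\sigs)$ is a genuine polytope (since $\sigs$ is finite), so that the $x$-coordinate, viewed as a linear functional on $\R^4$, attains its maximum over $\poly$ at a vertex. Concretely, I would take the direction $\alpha = (-1,0,0,0)$ and study the signatures $\sv{v} \in \sigs$ that are optimal for $\alpha$ in the sense defined above: the condition $\alpha \cdot (\sv{v} - \sv{v'}) \leq 0$ for all $\sv{v'} \in \sigs$ says that $\sv{v}$ minimizes $\alpha \cdot \sv{v} = -x$, i.e.\ maximizes the $x$-coordinate over $\sigs$.

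First I would check that maximizing the $x$-coordinate over all of $\poly$ returns exactly the value $x_{max}$. Every point of $\poly$ is a convex combination $\sum_i \lambda_i \sv{v_i}$ of points $\sv{v_i} \in \sigs$, so its $x$-coordinate is the corresponding convex combination $\sum_i \lambda_i x_i$ of the individual $x$-coordinates; since each $x_i \leq x_{max}$, this is at most $x_{max}$, and the bound is met because $\sigs \subseteq \poly$ already contains a point with $x = x_{max}$. Hence $\max\{x : (x,y,z,w) \in \poly\} = x_{max}$.

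Next I would identify the face on which this maximum is attained. The set $F = \{\sv{v} \in \poly : \alpha \cdot \sv{v} = -x_{max}\}$ is the face of $\poly$ exposed by $\alpha$, and it is nonempty by the previous step. Every nonempty face of a polytope contains at least one vertex (a $0$-dimensional face), and a vertex of $F$ is automatically a vertex of $\poly$. Choosing such a vertex $\sv{v}^{*} \in F \cap \optsig$ produces an element of $\optsig$ whose $x$-coordinate equals $x_{max}$, which is precisely the claim.

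As for the main obstacle: there really is none of substance, since the statement is an immediate instance of the principle that a linear functional attains its extrema at the vertices of a polytope. The only points needing (routine) verification are that the finite set $\sigs$ and its convex hull $\poly$ share the same maximal $x$-value, and that the exposed face $F$ contains a vertex; both are standard. I expect this proposition to function mainly as a template for the sharper vertex characterizations later in the section, where the direction $\alpha$ must be chosen with more care and the optimal signature must be pinned down explicitly rather than merely shown to exist.
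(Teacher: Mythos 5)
Your argument is correct. The one place it diverges from the paper is in how the optimal vertex is certified: you expose the face $F=\{\sv{v}\in\poly : x=x_{max}\}$ with the degenerate functional $(-1,0,0,0)$ and invoke the standard fact that a nonempty face of a polytope contains a vertex (which is then a vertex of \poly), whereas the paper stays inside the parameter family and observes that for fixed $b,c,d$ and $a$ sufficiently negative, any minimizer of $f_{a,b,c,d}$ over \sigs\ must have $x=x_{max}$, so some vertex attaining the minimum does too. Both are instances of the same principle that a linear functional is optimized at a vertex, but the paper's asymptotic phrasing is the form that gets reused later: it is cited to conclude that any region of \acpl\ corresponding to a vertex with $x<x_{max}$ is bounded due west (with the analogous east/north/south statements), i.e., that westward rays in the $d=1$, $b=b_0$ slice are eventually absorbed by cones of $x_{max}$-vertices. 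Your exposed-face version proves the literal existence claim, but since $(-1,0,0,0)$ has $d=0$ it says nothing by itself about the $d=1$ slice; to recover the downstream consequence you would still need the small perturbation the paper builds in. Otherwise the two proofs are interchangeable.
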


\begin{proof}
This holds since $x_{max} - x' > 0$ for all 
$\sv{v'} = (x', y', z', w') \in \sigs$ with $x' < x_{max}$.
Thus, as $a \rightarrow -\infty$ for fixed $b, c, d \in \R$, 
then $\alpha \cdot (\sv{v} - \sv{v'}) \leq 0$ for 
$\sv{v} = (x_{max}, y, z, w) \in \sigs$.
Hence $\sv{v}$ is a vertex of \poly\ for some choice of $y$, $z$, $w$.
\end{proof}

There may be more than one signature in \sigs\ with $x = x_{max}$.
In this case, optimality is determined by the relationship among
the other three parameters. Clearly, comparable  results hold for $y_{max}$, $z_{max}$, and $w_{max}$.
However, since $d$ is a dummy variable in the optimization, we will henceforth consider $d = 1$ fixed, which is the only case of interest. Also, dual definitions and results hold for $x_{min}$, $y_{min}$, $z_{min}$, and $w_{min}$.
The minimum value of $0$ is achieved for $x$, $y$, and $z$
simultaneously in a structure with no branch points,
and the empty structure is one such for any sequence.
Let \[w_0 = \min\{w: (0,0,0,w) \in \sigs\}.\]

\begin{proposition}
For each $b \in \R$, there exist $a, c \in \R$ such that 
$(0,0,0, w_0)$ is optimal for $(a, b, c, 1)$.
\label{zeroopt}
\end{proposition}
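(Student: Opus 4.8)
The plan is to verify the defining inequality for optimality directly, rather than appealing to any limiting argument. By definition, $(0,0,0,w_0)$ is optimal for $(a,b,c,1)$ precisely when, for every $\sv{v'} = (x',y',z',w') \in \sigs$,
\[ (a,b,c,1)\cdot\bigl((0,0,0,w_0) - (x',y',z',w')\bigr) \le 0, \]
which rearranges to the single condition
\[ w_0 - w' \le a x' + b y' + c z' \qquad \text{for all } \sv{v'} \in \sigs. \]
So the whole task reduces to choosing $a$ and $c$ (with $b$ prescribed) so that this inequality holds simultaneously over the finite set \sigs.

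First I would split \sigs\ according to whether $x' = 0$. If $x' = 0$, the structure has no multibranch loops, so it has no single-stranded nucleotides in such loops and no helices incident with them; hence $y' = z' = 0$ and $\sv{v'} = (0,0,0,w')$. In this case the right-hand side is $0$ while $w' \ge w_0$ by the very definition of $w_0$, so the inequality holds automatically and independently of the choice of $a,b,c$. This disposes of the potentially troublesome signatures whose penalty coordinates vanish.

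Next I would handle the signatures with $x' \ge 1$. The key structural fact is that every multibranch loop meets at least three helices, so $x' \ge 1$ forces $z' \ge 3$; both penalty coordinates are then bounded away from $0$. Since \sigs\ is finite, $w_0 - w'$ is bounded above by a constant and, because $y' \le y_{max}$, the term $b y'$ is bounded below by a constant. I would then take $a = c = M$ for a common large positive $M$: this gives $a x' + c z' \ge 4M$, while $w_0 - w' - b y'$ stays below a fixed bound, so any $M \ge (w_0 - w_{min} + |b|\,y_{max})/4$ forces the inequality for all such $\sv{v'}$. (One could equally take $c=0$ and only $a$ large, using $x' \ge 1$ alone, but the symmetric choice keeps the estimate transparent.)

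The main obstacle, though a mild one, is the term $b y'$: since $b$ is given and may be negative, this contribution to the right-hand side is not under our control and could a priori be large and negative. The resolution is exactly the boundedness of $y'$ coming from the finiteness of \sigs\ (equivalently $y' \le y_{max} \le n$), which lets us absorb the worst-case value of $b y'$ into the choice of $M$. Once the $x' = 0$ and $x' \ge 1$ cases are separated, the remainder is a routine estimate, and the argument also makes clear that the whole octant of sufficiently large positive $(a,c)$ lies in $\cone(0,0,0,w_0)$, which is what Corollary~\ref{ZeroRegion} will exploit.
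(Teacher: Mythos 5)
Your proof is correct and follows essentially the same route as the paper's: dispose of the $x'=0$ signatures via the definition of $w_0$, then for $x'\ge 1$ use $z'\ge 3$, $w'\ge w_{min}$, and $y'\le n$ to choose $a,c$ large and positive. The only cosmetic differences are that you set $a=c=M$ and absorb the sign of $b$ into a single $|b|\,y_{max}$ bound, where the paper keeps $a,c$ independent subject to $a+3c$ being large and splits into the cases $b\ge 0$ and $b<0$.
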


\begin{proof}
Let $\sv{v_0} = (0,0,0, w_0)$ and $\alpha = (a,b,c,1) \in \R^4$.
By construction, $\alpha \cdot \sv{v_0} \leq \alpha \cdot (0,0,0,w)$ 
for every other $(0,0,0,w) \in \sigs$.
Hence we consider $\sv{v} = (x, y, z, w) \in \sigs$ with $x > 0$. 

Suppose $b \geq 0$.  
Let $a \geq 0$, $c \geq 0$ with $a + 3c \geq w_0 - w_{min}$, where
$w_{min} = \min\{w : (x,y,z,w) \in \sigs\}$ as discussed above. 
Then since the parameters are all nonnegative and 
$x \geq 1$, $y \geq 0$, $z \geq 3$, $w \geq w_{min}$,
\[ \alpha \cdot \sv{v_0} = w_0 \leq a + 3c + w_{min} \leq \alpha \cdot \sv{v}.\] 
For $b < 0$, again let $a \geq 0$, $c \geq 0$ but with 
$a + 3c \geq w_0 - w_{min} - bn$.
Then 
\[ \alpha \cdot \sv{v_0} = w_0 \leq a + b n + 3c + w_{min} \leq \alpha \cdot \sv{v}\] 
since $y \leq n$ so $b y \geq b n$.
\end{proof}

Recall that \acpl\ denotes the intersection of \nfan\
with the hyperplanes $d = 1$ and $b = b_0$.
We show that many of the characteristics visible in Figure~\ref{fig:0slices}
hold in general.
To begin, the regions of \acpl\ divide into two basic types: unbounded and not.

Let $R$ be a region in \acpl\ corresponding to $(x, y, z, w) \in \optsig$
and containing the point $(a_0, b_0, c_0, 1)$.
We already know some general bounds on $R$ as a consequence 
of Proposition~\ref{asympmax}; 
if $x < x_{max}$, then $R$ is bounded due west 
(along the ray $(a_0 - t, b_0, c_0, 1)$ for $t \geq 0$), 
with analogous conclusions for $0 < x$ and due east, 
for $0 < z$ and due north 
(along the ray $(a_0, b_0, c_0 + t, 1)$ for $t \geq 0$), 
and for $z < z_{max}$ and due south.

Those regions which are unbounded in at least one direction 
divide into two sub-types which we call ``wedges'' and ``stripes''.
Recall that a convex polyhedron is the convex sum of its vertices plus the conical sum of the direction vectors of its infinite edges. 
For an unbounded 2-dimensional polyhedron $R$, we say that $R$ is a ``stripe'' if its infinite edges have the same direction, and a ``wedge'' if its infinite edges have two different directions.

Let $\cone{(x,y,z,w)}$ denote the cone in the normal fan of \poly\ 
associated to the vertex $(x, y, z, w) \in \optsig$.
As a consequence of the proof of Proposition~\ref{zeroopt}, we have:

\begin{corollary}
For each $b_0 \in \R$, $\cone(0,0,0,w_0) \cap \acpl$ is an unbounded wedge.
 \label{ZeroRegion}
\end{corollary}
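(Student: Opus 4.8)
The plan is to realize $R := \cone(0,0,0,w_0) \cap \acpl$ explicitly as a polyhedron in the $(a,c)$-plane and then read off its recession cone. Fixing $d=1$ and $b=b_0$, the vertex $\sv{v_0}=(0,0,0,w_0)$ is optimal for $(a,b_0,c,1)$ precisely when $w_0 \le a x' + b_0 y' + c z' + w'$ for every $\sv{v'}=(x',y',z',w')\in\sigs$. Every signature with $x'=0$ has $y'=z'=0$, so these contribute only the conditions $w_0 \le w'$, which hold by definition of $w_0$; hence
\[ R = \{(a,c) : a x' + c z' \ge w_0 - w' - b_0 y' \text{ for all } (x',y',z',w')\in\sigs \text{ with } x' \ge 1\}. \]
This is a finite intersection of half-planes, nonempty by Proposition~\ref{zeroopt}, whose boundary inequalities have inward normals $(x',z')$.

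First I would record the sign structure of these normals. Since each of the $x'$ multibranch loops of a structure is incident to at least three helices, every branching signature satisfies $z' \ge 3x'$, so each normal $(x',z')$ lies in the open first quadrant, on or above the ray of slope $3$. The recession cone of $R$ is then
\[ K = \{(u,v) : u x' + v z' \ge 0 \text{ for every normal } (x',z')\}, \]
the polar of the cone $N=\cone\{(x',z') : x'\ge 1\}$ generated by the normals. I would next show $K$ is full-dimensional by exhibiting an interior point: $(1,1)$ satisfies $x'+z' \ge 4 > 0$ for every normal, so $(1,1)\in\operatorname{int}K$; since $R$ is nonempty it then contains a translate of the full-dimensional cone $K$, hence $R$ is genuinely two-dimensional and unbounded.

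The heart of the argument is to show that $K$ is \emph{pointed} with two distinct extreme rays, which is exactly the statement that $R$ is a wedge rather than a stripe. Because $N$ lies in the first quadrant, its extreme rays are the directions of minimal and maximal slope $z'/x'$ among branching signatures. The minimal slope is $3$, attained by a single degree-three multibranch loop (a signature with $x'=1,\ z'=3$), and the maximal slope $s_{\max}$ is attained at some vertex with $s_{\max}>3$. Consequently $N$ is a two-dimensional pointed cone, and a direct computation of its polar gives $K = \cone\{(-3,1),\,(s_{\max},-1)\}$; these two directions are linearly independent (the determinant is $3-s_{\max}\ne 0$), so $K$ has exactly two distinct extreme rays. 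A nonempty two-dimensional polyhedron whose recession cone is a pointed full-dimensional cone has exactly two infinite edges, one parallel to each extreme ray, so $R$ is an unbounded wedge.

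I expect the main obstacle to be the pointedness step, specifically guaranteeing that two distinct slopes actually occur. The bound $z'\ge 3x'$ gives minimal slope $3$ for free, but concluding $s_{\max}>3$ requires the sequence to admit a branching structure that is not ``uniformly ternary'' — e.g.\ one multibranch loop of degree at least four, or an unbalanced arrangement of ternary loops. This is the only place structural information about the sequence $s$ enters; under the mild hypothesis that $s$ folds into such a structure (true for the sequences considered), $N$ is two-dimensional and the wedge conclusion follows. Were every branching signature to have slope exactly $3$, then $N$ would collapse to the single ray of slope $3$ and $R$ would degenerate to a half-plane, so this hypothesis is precisely what separates the wedge case from the degenerate stripe case.
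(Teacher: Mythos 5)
Your proof is correct in substance, but it takes a genuinely different route from the paper's. The paper disposes of this corollary in one line: the explicit parameter choices in the proof of Proposition~\ref{zeroopt} already exhibit, inside $\cone(0,0,0,w_0)\cap\acpl$, the full two\-/dimensional region $\{a\ge 0,\ c\ge 0,\ a+3c\ge w_0-w_{min}\}$ (resp.\ the analogous region with $-b_0n$ for $b_0<0$), so the recession cone of the region contains the entire NE quadrant and the region therefore recedes in two independent directions --- that is all the paper uses. You instead realize the region exactly as the finite intersection of half\-/planes $ax'+cz'\ge w_0-w'-b_0y'$ over branching signatures and compute its recession cone as the dual of $N=\cone\{(x',z')\}$, which is a stronger and more informative statement: it identifies the precise asymptotic directions $(-3,1)$ and $(s_{\max},-1)$ of the wedge, information the paper only obtains later and piecemeal (e.g.\ in the discussion after Proposition~\ref{xmaxzmin}). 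The price is that your pointedness step needs the extra hypothesis $s_{\max}>3$, i.e.\ some signature with $z'>3x'$, which the paper never assumes; its weaker containment argument is hypothesis\-/free. You are right that if every branching signature satisfied $z'=3x'$ the region would degenerate to a half\-/plane, and the paper silently glosses over this (its NE\-/quadrant argument does not exclude the half\-/plane case either, and a half\-/plane is not literally ``infinite edges in two different directions''). So your caveat is a legitimate sharpening of the statement rather than a defect of your argument; just be aware that you have proved a slightly different (conditional but more precise) result than the unconditional one the paper asserts, and that for the paper's purposes the crude containment of a translated NE quadrant already suffices.
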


Although redundant, we retain the ``unbounded'' as an adjective to 
emphasize this as the primary characteristic, 
with the specific geometric subtype as a secondary one.

As can be seen from the choice of parameters in the proof of Proposition~\ref{zeroopt}, $\cone(0,0,0,w_0)$ dominates the NE quadrant of the $(a,c)$ plane. Moving north ($c \rightarrow \infty$) or east 
($a \rightarrow \infty$) from any point in the $(a,c)$ plane  
for $b = b_0$, $d = 1$, will eventually intersect $\cone(0,0,0, w_0)$.
Hence, there are no unbounded NE rays outside of $\cone(0, 0, 0, w_0)$. We will prove  dual statements about wedges in the SW quadrant and SW rays based on the observation that the maximum for $x$ and $z$ occur simultaneously.

For a particular $x_0 \in \R$, 
let $z_{max}(x_0)$ be the maximum number of branches for signatures 
with the given number of branch points, that is 
\[ z_{max}(x_0) = \max\{z: (x_0, y, z, w) \in \sigs\}. \]
There are obvious analogous definitions exchanging $x$ and $z$, etc.,
and dual ones for minimization.

\begin{proposition}
For each $b_0 \in \R$, there exist $y, w \in \R$ such that
$\cone(x_{max}, y, z_{max}(x_{max}), w)$ is an unbounded wedge in \acpl.
\label{xmaxzmax}
\end{proposition}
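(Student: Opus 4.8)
The plan is to mirror the construction behind Proposition~\ref{zeroopt} and Corollary~\ref{ZeroRegion}, but in the opposite, deep-southwest corner of the $(a,c)$-plane, and to read off the two boundary directions of the region directly from its recession cone, so that no global identity between $z_{max}(x_{max})$ and the unconditional maximum $z_{max}$ is ever needed. Throughout I assume the interesting case $x_{max}\geq 1$ (otherwise the statement collapses into Corollary~\ref{ZeroRegion}). First I would pin down the vertex. Among the finitely many signatures in \sigs\ with $x=x_{max}$ and $z=z_{max}(x_{max})$, let $(y^{*},w^{*})$ minimize $b_0 y + w$, choosing among the minimizers one that is an actual vertex of \poly\ (such a choice exists, as a vertex lies on the optimal face), and set $\sv{v}^{*}=(x_{max},y^{*},z_{max}(x_{max}),w^{*})\in\optsig$. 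Fixing $d=1$, $b=b_0$ and writing $\alpha=(a,b_0,c,1)$, optimality of $\sv{v}^{*}$ at $\alpha$ is the family of inequalities
\[ a\,(x_{max}-x') + c\,(z_{max}(x_{max})-z') \;\leq\; b_0\,(y'-y^{*}) + (w'-w^{*}), \]
ranging over all $\sv{v}'=(x',y',z',w')\in\sigs$; this is exactly the slice description of $R^{*}:=\cone(\sv{v}^{*})\cap\acpl$.

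Next I would show $\sv{v}^{*}$ is optimal on a full-dimensional (two-dimensional) cone of parameters pointing deep into the southwest, with $|a|$ large relative to $|c|$, by splitting the signatures into three classes. If $x'=x_{max}$ and $z'=z_{max}(x_{max})$, both left-hand terms vanish and the inequality reduces to $0\leq b_0(y'-y^{*})+(w'-w^{*})$, which holds by the choice of $(y^{*},w^{*})$. If $x'=x_{max}$ but $z'<z_{max}(x_{max})$, the $a$-term vanishes, $z_{max}(x_{max})-z'\geq 1$, and $c\,(z_{max}(x_{max})-z')\to-\infty$ as $c\to-\infty$ carries the inequality strictly. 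The delicate case is $x'<x_{max}$: here $x_{max}-x'\geq 1$, but if $z'>z_{max}(x_{max})$ the term $c\,(z_{max}(x_{max})-z')$ \emph{grows} as $c\to-\infty$, so I must let $|a|$ dominate, choosing $|a|/|c|$ large enough that $a\,(x_{max}-x')$ beats it; finiteness of \sigs\ makes a single such slope work for all these signatures at once. On the resulting open cone $\sv{v}^{*}$ is the unique minimizer among signatures of differing $(x,z,b_0y+w)$-value, so $R^{*}$ contains a two-dimensional open set and is therefore itself two-dimensional and unbounded.

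It remains to show $R^{*}$ is a wedge and not a stripe, i.e.\ that its recession cone
\[ \{(p,q) : p\,(x_{max}-x') + q\,(z_{max}(x_{max})-z') \leq 0 \text{ for all } \sv{v}'\in\sigs\} \]
is two-dimensional. I would exhibit two linearly independent recession directions. The direction $(-1,0)$ (due west) is feasible for every signature because $x_{max}-x'\geq 0$; this is the edge shared by all the $x=x_{max}$ regions. For the second direction take $(-M,-1)$: signatures with $x'=x_{max}$ contribute $z'-z_{max}(x_{max})\leq 0$, while signatures with $x'<x_{max}$ contribute $-M(x_{max}-x')+(z'-z_{max}(x_{max}))$, which is $\leq 0$ as soon as $M\geq z_{max}$ since $x_{max}-x'\geq 1$. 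Hence $(-M,-1)$ is a recession direction, and together with $(-1,0)$ it spans a genuinely two-dimensional cone, so $R^{*}$ is an unbounded wedge.

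The main obstacle is precisely the point flagged earlier: the second, southwest-pointing edge must be produced \emph{without} assuming $z_{max}(x_{max})=z_{max}$. When competing signatures with $z'>z_{max}(x_{max})$ exist (necessarily at $x'<x_{max}$), they simultaneously (i) force the optimality region to be a proper cone in which $|a|$ must dominate $|c|$, rather than a half-plane, and (ii) cut the recession cone off before it reaches the due-south direction, so a naive fixed slope such as $(-1,-1)$ can fail against a competitor with $x'+z'$ large. What rescues both points is the finiteness of \sigs: because $x_{max}-x'$ and $z_{max}(x_{max})-z'$ take only finitely many integer values, a single sufficiently steep slope ($M\geq z_{max}$ suffices) dominates all the finitely many competing constraints at once, keeping $(-M,-1)$ feasible and forcing the recession cone to be two-dimensional. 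The only remaining care is the tie-breaking in selecting $(y^{*},w^{*})$, which must land on an actual vertex of \poly\ so that $\cone(\sv{v}^{*})$ is full-dimensional and $R^{*}$ is a true two-dimensional wedge rather than a lower-dimensional face.
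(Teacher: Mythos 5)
Your proof is correct and takes essentially the same approach as the paper's: both select the signature with $x=x_{max}$, $z=z_{max}(x_{max})$ minimizing $b_0y+w$, and both establish optimality on a two-dimensional southwest region in which $a$ dominates $c$ with slope at least $z_{max}-z_{max}(x_{max})$ --- the paper's constraint $a_{0}+c_{0}(z-z_{max}) \leq w_{min}-w-b_{0}y$ is precisely your slope condition, and the paper's proof likewise never invokes Observation~\ref{observmax}. Your only deviations are presentational: you replace the paper's explicit comparison signatures and the $b_{0}\geq 0$ versus $b_{0}<0$ casework with uniform bounds from the finiteness of \sigs, and you make explicit the two recession directions $(-1,0)$ and $(-M,-1)$ that the paper's parameter constraints leave implicit.
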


\begin{proof}
Let $x = x_{max}$, $z = z_{max}(x_{max})$ and $y, w$ be such that
$\sv{v} = (x, y, z, w) \in \sigs$ and $b_0 y + w$ is the least possible for
the given $b_0$.
Let $\alpha = (a_{0}, b_{0}, c_{0},1) \in \acpl$ and
$\sv{v'} = (x',y',z',w') \in \sigs$.
We may assume that either $x' = x$ and $3 x \leq z' < z$ or that
$x > x' \geq 0$.
We will use that $w \geq w_{min}$,  $n \geq y \geq 0$,
and  $z \leq z_{max}$.

Suppose $b_0 \geq 0$.
Then $w_{min} - w - b_0 y \leq 0$.
Let $a_{0}, c_{0}$ be such that
 \[a_{0} \leq 0, \; c_{0} \leq w_{min} - w - b_0 y, \; 
a_{0} + c_{0}(z - z_{max}) \leq w_{min} - w - b_{0} y.\]

\item If $x'=x$ and $z' \leq z-1$ then using the upper bound for $c_{0}$,
\begin{align*}(a_{0}, b_{0}, c_{0},1) \cdot (x', y', z', w') &\geq (a_{0}, b_{0}, c_{0},1) \cdot (x, 0, z-1, w_{min}) \\ &\geq   (a_{0}, b_{0}, c_{0},1) \cdot (x, y, z, w).\end{align*}
\item If $x' \leq x -1$ then from the choice of $a_0$ and $c_0$ and the fact that $c_{0} \leq 0$ we have
\begin{align*}(a_{0}, b_{0}, c_{0},1) \cdot (x', y', z', w') &\geq (a_{0}, b_{0}, c_{0},1) \cdot (x-1, 0, z_{max}, w_{min}) \\ & \geq  (a_{0}, b_{0}, c_{0},1) \cdot (x, y, z, w).\end{align*}

So, $\alpha \cdot (\sv{v} - \sv{v'}) \leq 0$. Now suppose $b_{0} <0$.
Then $w_{min} - w + b_0 (n - y) \leq 0$.
Let $a_{0}, c_{0}$ be such that
 \[a_{0} \leq 0, \; c_{0} \leq  w_{min} - w  + b_{0}(n - y), \; 
a_{0} +c_{0} (z - z_{max}) \leq w_{min} - w + b_{0}(n-y).\]
Now  $\alpha \cdot (\sv{v} - \sv{v'}) \leq 0$
follows from the choice of $c_0$ by comparison with
$\alpha \cdot (x, n, z - 1, w_{min})$ if $x' = x$ and $z' \leq z - 1$
and from the choice of $a_0$ and $c_0$ by comparison with
$\alpha \cdot (x-1, n, z_{max}, w_{min})$ if $x > x' \geq 0$.
\end{proof}

Using the same argument in which the roles of $x$ and $z$ are swapped, we can conclude that a part of the SW quadrant belongs to another unbounded wedge.

\begin{proposition}
For each $b_0 \in \R$, there exist $y, w \in \R$ such that
$\cone(x_{max}(z_{max}), y, z_{max}, w)$ is an unbounded wedge in \acpl.
\label{zmaxxmax}
\end{proposition}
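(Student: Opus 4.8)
The plan is to mirror the proof of Proposition~\ref{xmaxzmax} verbatim, exchanging the roles of the coordinates $x$ and $z$ throughout. The previous proposition selected the vertex achieving $x = x_{max}$ together with the maximal number of branches $z_{max}(x_{max})$ subject to that constraint, and showed the associated cone is an unbounded wedge in \acpl. Here we instead fix $z = z_{max}$ and take the largest number of branch points $x_{max}(z_{max})$ attainable for that $z$. The key observation enabling the symmetry is that the defining inequalities for optimality treat $a$ (the cost of a branch point) and $c$ (the cost of a branch) in structurally identical ways, so the entire argument is symmetric under simultaneously swapping $x \leftrightarrow z$ and $a \leftrightarrow c$.

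First I would fix $z = z_{max}$ and $x = x_{max}(z_{max})$, and choose $y, w$ so that $\sv{v} = (x,y,z,w) \in \sigs$ minimizes $b_0 y + w$ for the given $b_0$. Then, given an arbitrary competitor $\sv{v'} = (x',y',z',w') \in \sigs$, I would reduce to the two cases $z' = z$ with $x' < x$, and $z' < z$ with $x' \geq 0$ — the mirror image of the case split in Proposition~\ref{xmaxzmax}. In each case I would choose the parameters $a_0, c_0 \leq 0$ to satisfy the swapped analogues of the three bounding inequalities: now it is $c_0 \leq 0$ together with $a_0$ bounded above by $w_{min} - w - b_0 y$ (for $b_0 \geq 0$) and the combined bound on $c_0 + a_0(x - x_{max})$. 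The verification that $\alpha \cdot (\sv{v} - \sv{v'}) \leq 0$ then proceeds by comparison with the extremal signatures $(x-1, 0, z, w_{min})$ and $(x, 0, z_{max}, w_{min})$ (or their $y = n$ analogues when $b_0 < 0$), exactly as before but with the indices transposed.

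Since the parameters $a_0, c_0$ can be chosen freely within an unbounded region — specifically any sufficiently negative pair satisfying the swapped inequalities — the intersection $\cone(x_{max}(z_{max}), y, z_{max}, w) \cap \acpl$ contains two independent unbounded rays, giving infinite edges in two distinct directions, hence an unbounded wedge. The main point requiring care, rather than the main obstacle, is checking that the maximization defining $x_{max}(z_{max})$ is compatible with the simultaneous achievability of $z = z_{max}$: one must confirm that some signature attains $z_{max}$ at all, which holds because $z_{max}$ is by definition realized by some element of \sigs, and then $x_{max}(z_{max})$ is well-defined as the maximum over that nonempty fiber. Beyond this, the argument is a routine transcription; because the optimization is genuinely symmetric in the two costs, no new geometric input is needed and the conclusion follows from the same bookkeeping as Proposition~\ref{xmaxzmax}.
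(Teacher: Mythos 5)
Your proposal is correct and matches the paper exactly: the paper proves Proposition~\ref{zmaxxmax} by a one-line appeal to the same argument as Proposition~\ref{xmaxzmax} with the roles of $x$ and $z$ (and hence $a$ and $c$) swapped, which is precisely what you carry out. One small transcription slip: in the case $z' \leq z-1$ the correct comparison signature is $(x_{max}, 0, z-1, w_{min})$ rather than $(x, 0, z_{max}, w_{min})$, but the inequality you state for $c_0 + a_0(x - x_{max})$ is the right one, so the argument goes through.
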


The wedges from Propositons~\ref{xmaxzmax} and~\ref{zmaxxmax} can coincide, which is indeed the case in all of the examples we have seen~\cite{inprep}. Namely, all of the sequences have the following property.


\begin{observation}
We have $z_{max} = z_{max}(x_{max})$ or, equivalently, $x_{max} = x_{max}(z_{max})$.
\label{observmax}
\end{observation}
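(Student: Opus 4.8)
The plan is to pass to the reduced rooted plane tree of Section~\ref{preliminaries}, where the multibranch loops are the internal non-root nodes, the hairpins are the leaves, and the helices are the edges. Write $E$ for the number of helices and $r$ for the number of helices incident with the exterior loop (the degree of the root). A helix contributes to $z$ once for every branching loop at its two ends, so $z=\sum_v \deg(v)$, the sum being over the multibranch-loop nodes $v$. A branching node has one parent helix and $c_v$ child helices (tree-degree $1+c_v\ge 3$), so $z=\sum_v(1+c_v)=x+\sum_v c_v$. Since every edge is the child-edge of a unique parent and only the root and the branching nodes have children, $E=r+\sum_v c_v$, and therefore
\[ z = x + E - r. \]
This identity is the engine of the argument: among structures with a common helix count and a common number of exterior branches, $z$ grows precisely with the number of branch points.

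Only one inequality needs proof, since $z_{max}(x_{max})\le z_{max}$ is automatic (the left-hand side maximizes over the smaller set of signatures with $x=x_{max}$). Thus it suffices to produce one structure with $x=x_{max}$ and $z=z_{max}$. Bounding each term of the identity separately, every structure satisfies $x\le x_{max}$, $E\le E_{max}$ (the largest helix count realizable for $s$), and $r\ge 1$, so
\[ z = x + E - r \le x_{max} + E_{max} - 1. \]
Equality here is rigid: a structure attaining this bound must have $x=x_{max}$ (and $E=E_{max}$, $r=1$). Hence the whole statement follows once one exhibits a single valid structure for $s$ that simultaneously attains $x_{max}$, $E_{max}$, and $r=1$ --- equivalently, a helix-maximal structure whose reduced tree is binary (every multibranch loop has the minimum three incident helices) with a single exterior helix.

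The combinatorial half of this is clean. Any structure has $x\le (E-1)/2$, with equality only for binary trees with $r=1$, so $x_{max}\le (E_{max}-1)/2$; a binary, helix-maximal structure with $r=1$ would meet this with equality and hence realize all three maxima at once. Moreover the relevant local move is monotone in the right way: a multibranch loop with four or more incident helices can be \emph{split} by inserting one new helix that gathers two or more of its branches into a nested sub-loop, which raises $E$ and $x$ each by one (and $z$ by two) while keeping the matching non-crossing. Repeatedly splitting drives a structure toward the binary optimum.

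The main obstacle --- and the reason this is recorded as an Observation rather than proved in general --- is realizability over the fixed sequence $s$. Each inserted helix must be built from legal base pairs and must leave enough unpaired nucleotides for every hairpin and loop to satisfy the minimum-size constraints, so a helix-maximal structure need not be splittable: it may contain an unsplittable high-degree loop, in which case the maximum-branching structure (binary, with only $2x_{max}+1$ helices) and a helix-maximal structure are realized by different, incompatible matchings, and which of them carries the larger $z$ is genuinely sequence-dependent. I would therefore isolate the needed hypothesis --- that $s$ admits a helix-maximal structure whose reduced tree is binary with a single exterior branch --- and verify that it holds under the mild conditions satisfied by the RNA sequences of interest; the contribution of a full proof would be to pin down exactly which properties of $s$ force these two extremal structures to coincide.
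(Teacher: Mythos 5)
The first thing to say is that the paper does not prove this statement at all: it is labeled an \emph{Observation} precisely because it is an empirical property, checked computationally for the sequences the authors examined, and the Conclusion explicitly lists its mathematical justification as an open question. So there is no proof in the paper to compare yours against, and any argument claiming to establish it for arbitrary sequences over $\{\A,\C,\G,\U\}$ should be suspect. Your proposal is honest on this point, and that is its main virtue: you end by isolating a realizability hypothesis on $s$ rather than claiming a theorem, which matches how the paper actually uses the statement (as a standing assumption, cf.\ Corollary~\ref{maxwedge} and Theorem~\ref{negslope}).

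On the mathematics: the identity $z = x + E - r$ is correct, with the caveat that $E$ must be the number of edges of the \emph{reduced} tree --- these are branches (maximal helix runs between loops of the tree), not individual helices, so ``helix-maximal'' should read ``branch-maximal,'' a different extremal quantity for $s$. The rigidity argument is also correct: a structure simultaneously attaining $x_{max}$, $E_{max}$ and $r=1$ is forced to be binary and realizes $z_{max}$ with $x = x_{max}$, so exhibiting one such structure would prove the Observation. But note that this is a sufficient condition strictly stronger than the Observation itself: the bound $z \le x_{max} + E_{max} - 1$ need not be attained for a given sequence, and $z_{max} = z_{max}(x_{max})$ can hold anyway (a $z$-maximizer may be non-binary or have $r>1$ and still have $x=x_{max}$). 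So failing to find your triple-extremal structure does not refute the Observation, and your splitting move runs into exactly the sequence-dependent obstructions (legal base pairs, minimum loop sizes) that you name --- the same kind of obstruction the paper itself exhibits with the sequence $(\G\A\C\A\A\A)^6$ for the related property $x_{min}(z_{max})=x_{max}$. In short: your reduction is sound and sharper than anything the paper offers for this statement, but it does not, and cannot without extra hypotheses on $s$, constitute a proof; the paper treats the statement as an assumption, and in the end so do you.
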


In contrast to the situation with $(0,0,0, w_0)$, however, even under the assumption from Observation~\ref{observmax}, there may 
be different optimal signatures $(x_{max}, y, z_{max}, w) \in \optsig$ depending on the choice of $b$. To summarize, we have the following dual of Corollary~\ref{ZeroRegion}.

\begin{corollary}
If Observation~\ref{observmax} holds, then for each each $b_0 \in \R$
there exist $y_{m}, w_{m} \in \R$
such that $\cone(x_{max}, y_{m}, z_{max}, w_{m}) \cap \acpl$ is an unbounded wedge. 
\label{maxwedge}
\end{corollary}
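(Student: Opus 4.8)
The plan is to combine the two wedge results, Propositions~\ref{xmaxzmax} and~\ref{zmaxxmax}, under the hypothesis of Observation~\ref{observmax}. The observation says $z_{max} = z_{max}(x_{max})$, which means the vertex produced by Proposition~\ref{xmaxzmax}, namely $(x_{max}, y, z_{max}(x_{max}), w) = (x_{max}, y, z_{max}, w)$, already realizes the simultaneous maximum of both $x$ and $z$. So the first step is simply to invoke Proposition~\ref{xmaxzmax} directly: for each $b_0 \in \R$ it hands us $y_m, w_m \in \R$ with $\sv{v} = (x_{max}, y_m, z_{max}, w_m) \in \sigs$, and the proof of that proposition exhibits explicit $(a_0, b_0, c_0, 1)$ for which $\sv{v}$ is optimal. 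Under Observation~\ref{observmax} the equivalent statement $x_{max} = x_{max}(z_{max})$ shows Proposition~\ref{zmaxxmax} produces a vertex with the same $x$- and $z$-coordinates, so the two cones are the same one we are asserting exists, which is why the statement is a genuine corollary rather than a separate argument.

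The second step is to confirm that the relevant slice is an \emph{unbounded wedge} and not a stripe or a bounded region. The proof of Proposition~\ref{xmaxzmax} produces, for the given $b_0$, a full two-parameter family of $(a_0, c_0)$ satisfying the three displayed inequalities, with $a_0 \le 0$ and $c_0$ bounded above by a negative (or at least fixed) quantity. I would point out that these inequalities cut out a two-dimensional region of the $(a,c)$ plane whose boundary consists of two distinct supporting lines (one coming from the $x' = x$, $z' \le z-1$ comparison and one from the $x' \le x - 1$ comparison), so the infinite edges of $\cone(x_{max}, y_m, z_{max}, w_m) \cap \acpl$ point in two different directions. By the definition of wedge versus stripe recalled just before Corollary~\ref{ZeroRegion}, two distinct edge directions is exactly the wedge condition, which dualizes the SW-quadrant reasoning described after Proposition~\ref{zmaxxmax}.

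The main obstacle, and the one point I would be careful to address explicitly, is the clause that the optimal signature may genuinely differ with $b_0$: unlike the $(0,0,0,w_0)$ case where $w_0$ is a single fixed minimizer, here the minimizing $(y_m, w_m)$ depends on $b_0$ through the quantity $b_0 y + w$ that Proposition~\ref{xmaxzmax} minimizes. So the corollary cannot assert a single cone valid for all $b_0$; it must be stated as ``for each $b_0$ there exist $y_m, w_m$''. I would make sure the argument picks $(y_m, w_m)$ as the minimizer of $b_0 y + w$ over all signatures with $x = x_{max}$ and $z = z_{max}$, exactly matching the selection in the proof of Proposition~\ref{xmaxzmax}, so that the explicit parameter construction there applies verbatim and yields the wedge. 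With that dependence acknowledged, the corollary follows immediately as the dual of Corollary~\ref{ZeroRegion}, establishing that a portion of the SW quadrant of $\acpl$ is dominated by a single wedge $\cone(x_{max}, y_m, z_{max}, w_m)$.
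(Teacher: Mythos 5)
Your proposal is correct and matches the paper's own (very brief) justification: the paper likewise derives Corollary~\ref{maxwedge} by noting that under Observation~\ref{observmax} the vertex from Proposition~\ref{xmaxzmax} has $z_{max}(x_{max}) = z_{max}$, so the two wedges of Propositions~\ref{xmaxzmax} and~\ref{zmaxxmax} coincide and dominate the SW quadrant. Your extra care about the $b_0$-dependence of $(y_m, w_m)$ and the two distinct infinite-edge directions is consistent with the remarks the paper makes immediately after the corollary.
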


While the existence of such a wedge in each $\acpl$  follows from Propositions~\ref{xmaxzmax} and~\ref{zmaxxmax}, the proofs  additionally describe its geometry. Namely, for $b_{0} \geq 0$, the nonempty $\cone(x_{max}, y_m, z_{max}, w_m) \cap \acpl$ contains the region 
\[a_{0}, c_{0} \leq w_{min} - w_{m} - b_0 y_{m}\] and for $b_{0} < 0$ it contains the region\[a_{0}, c_{0} \leq w_{min} - w_{m}  + b_{0}(n - y_{m}).\]  Thus, if it exists,  $\cone(x_{max}, y_m, z_{max}, w_m)$ dominates the SW quadrant of $\acpl$.

%
%

We have already shown that among the edges of the unbounded regions of \acpl\ there are no unbounded NE rays. Observation~\ref{observmax} implies the dual statement about SW rays.  As a consequence, the infinite edges of the unbounded regions of \acpl\ conform to a particular geometry.

\begin{theorem}
If, and only if, Observation~\ref{observmax} holds, then an infinite edge 
for an unbounded region of $\acpl$ cannot have positive slope.
\label{negslope}
\end{theorem}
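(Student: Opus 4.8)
The plan is to first convert ``positive slope'' into a statement about ray directions, and then to combine the two domination results already in hand. In the slice \acpl\ (fix $b = b_0$, $d = 1$) the part of the objective that varies is $a x + c z$, the term $b_0 y + w$ being a constant attached to each vertex. Hence the wall between the regions of two vertices $\sv{v} = (x,y,z,w)$ and $\sv{v'} = (x',y',z',w')$ lies on the line $(x - x')\,a + (z - z')\,c = (b_0 y' + w') - (b_0 y + w)$, so an infinite edge has normal $(x - x',\, z - z')$ and direction proportional to $(z' - z,\, x - x')$, and its slope equals $-(x - x')/(z - z')$. This slope is positive precisely when $x - x'$ and $z - z'$ have opposite signs, in which case the direction $(z' - z,\, x - x')$ has both coordinates of equal sign, i.e.\ the edge is a ray pointing strictly into the NE or the SW quadrant; conversely every strictly NE or SW infinite ray has positive slope. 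The theorem thus reduces to the claim that Observation~\ref{observmax} holds if and only if \acpl\ has no infinite edge pointing strictly NE and none pointing strictly SW.

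For the forward implication I would treat the two quadrants separately. The NE case needs no hypothesis: since $\cone(0,0,0,w_0)$ dominates the NE quadrant, its recession cone contains the entire open first quadrant of directions, and as it is a single convex wedge its two infinite edges are its extreme rays, which therefore lie at or outside due east and due north; hence neither points strictly NE, while by the discussion following Corollary~\ref{ZeroRegion} no region outside $\cone(0,0,0,w_0)$ carries an NE ray. The SW case is where Observation~\ref{observmax} enters: by Corollary~\ref{maxwedge} the wedge $\cone(x_{max}, y_m, z_{max}, w_m)$ then dominates the SW quadrant, and the mirror-image argument shows its recession cone contains the open third quadrant, so its edges lie at or outside due west and due south while no other region carries an SW ray. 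Combining the two cases, no infinite edge has positive slope.

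For the converse I would reason at infinity, using that for a generic direction $(a,c)$ of large magnitude the optimal vertex is obtained by lexicographically maximizing the pair $(x,z)$ in the order dictated by the relative sizes of $|a|$ and $|c|$. Assume Observation~\ref{observmax} fails, so that $z_{max}(x_{max}) < z_{max}$ and, dually, $x_{max}(z_{max}) < x_{max}$. A direction just inside the SW quadrant from due west, $(-1,-\epsilon)$ with $\epsilon>0$ small, maximizes $x$ first and then $z$, so its optimal signature has $(x,z) = (x_{max}, z_{max}(x_{max}))$; a direction just inside the SW quadrant from due south, $(-\epsilon,-1)$, maximizes $z$ first and then $x$, so its optimal signature has $(x,z) = (x_{max}(z_{max}), z_{max})$. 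Under the failure of Observation~\ref{observmax} these two projections differ, so as the direction sweeps across the open third quadrant from $(-1,-\epsilon)$ to $(-\epsilon,-1)$ the optimal $(x,z)$ must change; the direction at which it changes is the direction of an infinite edge pointing strictly SW, which by the first paragraph has positive slope.

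The main obstacle is the converse. One must verify that a change in the optimal $(x,z)$ along the sweep corresponds to a genuine infinite edge of \acpl\ rather than a lower-dimensional degeneracy, and that the transition direction lies strictly inside the open third quadrant rather than on the due-west or due-south axis. Both points hinge on making the lexicographic selection rigorous: I would choose $\epsilon$ small enough (depending on the finitely many signatures and on $b_0$) that the constant $b_0 y + w$ cannot overturn the $(x,z)$-comparison for directions of sufficiently large magnitude. This is exactly the kind of estimate already carried out in the proofs of Propositions~\ref{xmaxzmax} and~\ref{zmaxxmax}, now applied uniformly along an entire arc of directions rather than at a single cone.
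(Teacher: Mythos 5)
Your proof is correct and follows essentially the same route as the paper: the forward direction rests on the domination of the NE quadrant by $\cone(0,0,0,w_0)$ and, under Observation~\ref{observmax}, of the SW quadrant by $\cone(x_{max}, y_{m}, z_{max}, w_{m})$, while the converse exhibits two distinct regions unbounded into the open third quadrant and concludes that one of them must carry a strictly SW infinite edge. Your opening translation of ``positive slope'' into ``ray pointing strictly NE or strictly SW'' and your sweep-of-directions phrasing of the converse make explicit what the paper leaves implicit, but the substance matches the paper's direct appeal to the two wedges of Propositions~\ref{xmaxzmax} and~\ref{zmaxxmax}.
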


\begin{proof} 
The  proof of Proposition~\ref{zeroopt}  implies that the wedge $\cone(0,0,0, w_0) \cap \acpl$ contains the angle from $0$ to $\pi/2$ for any point in the region. This means that no unbounded region can have a NE edge.
 
 If Observation~\ref{observmax} holds, by Corollary~\ref{maxwedge}, there exist $y, w \in \R$
such that $\cone(x_{max}, y, z_{max}, w) \cap \acpl$ is an unbounded wedge. Moreover,  as we observed above, the proofs of Propositions~\ref{xmaxzmax} and~\ref{zmaxxmax}
indicate that this wedge contains the angle from $\pi$ to $3 \pi/2$ for any point in the region. This means that no unbounded region can have a SW edge. On, the other hand, if $z_{max} \neq z_{max}(x_{max})$, then the SW quadrant of  $\acpl$ contains unbounded portions of at least two wedges (possibly more): namely $\cone(x_{max}, y', z_{max}(x_{max}), w')$ and $\cone(x_{max}(z_{max}), y'', z_{max}, w'')$, for some $y', w', y'', w'' \in \R$, and hence these wedges have an unbounded SW edge.
\end{proof}

In general, the slopes of finite boundary edges are also negative.  
However, horizontal and vertical edges are seen and, though more rare,
positive slopes for a bounded region of \acpl\ have been 
observed. See~\cite{inprep} for numerical results about the bounded regions.

We next describe the unbounded regions that we see as we traverse  \acpl\ counter-clockwise from $(0,0,0, w_0)$
around to $(x_{max}, y_m, z_{max}, w_m)$. We start by proving that crossing a region boundary in the $(a,c)$ plane in either 
a horizontal or vertical direction implies a strict change in the
number of branching points or of branches, respectively, for the 
associated signatures.

\begin{proposition}
Suppose $(x, y, z, w), (x', y', z', w') \in \optsig$ are optimal for 
$(a, b_0, c, 1)$ and $(a', b_0, c', 1)$ respectively, where these
parameters lie in the interior of two distinct regions of \acpl.
If $a = a'$ but $c < c'$, then $z > z'$.
Similarly, if $c = c'$ but $a < a'$, then $x > x'$.
\label{boundary}
\end{proposition}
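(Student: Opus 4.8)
The plan is to use the two optimality conditions at once and add them so that the matching coordinates cancel, a standard averaging argument for parametric linear optimization. Write $\alpha = (a, b_0, c, 1)$ and $\alpha' = (a', b_0, c', 1)$, and put $\sv{v} = (x,y,z,w)$, $\sv{v'} = (x',y',z',w')$. Optimality of $\sv{v}$ for $\alpha$ gives $\alpha \cdot (\sv{v} - \sv{v'}) \leq 0$, and optimality of $\sv{v'}$ for $\alpha'$ gives $\alpha' \cdot (\sv{v'} - \sv{v}) \leq 0$. Adding these two inequalities, the $y$- and $w$-contributions cancel because the second and fourth coordinates of $\alpha$ and $\alpha'$ coincide, leaving $(\alpha - \alpha') \cdot (\sv{v} - \sv{v'}) \leq 0$, i.e. $(a - a')(x - x') + (c - c')(z - z') \leq 0$.

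First I would read off the weak inequalities from this single relation. In the vertical case $a = a'$, $c < c'$ it collapses to $(c - c')(z - z') \leq 0$, and since $c - c' < 0$ this forces $z - z' \geq 0$, so $z \geq z'$. The horizontal case $c = c'$, $a < a'$ is entirely symmetric: the relation collapses to $(a - a')(x - x') \leq 0$, and $a - a' < 0$ yields $x \geq x'$. Thus the monotonicity in both claims falls out of one cancellation, with the $x$- and $z$-statements being mirror images of each other.

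The hard part will be upgrading these weak inequalities to the strict ones, and this is exactly where the hypothesis that $\alpha$ and $\alpha'$ lie in the \emph{interiors} of distinct regions of \acpl\ is needed. Suppose in the vertical case that $z = z'$; then the right-hand side of the displayed bound is $0$, so the sum of the two nonpositive optimality terms vanishes, forcing each to be $0$ individually. In particular $\alpha \cdot (\sv{v} - \sv{v'}) = 0$, so $\sv{v'}$ is also optimal for $\alpha$. But a point in the interior of a region of \acpl\ lies in the interior of the corresponding normal cone of \poly, where the minimizing vertex is unique; hence $\sv{v'} = \sv{v}$, contradicting that the two regions are distinct. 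Therefore $z > z'$, and the same argument with $x$ in place of $z$ settles the horizontal claim.

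The one point that deserves care — and that I would state explicitly before invoking it — is the identification of the relative interior of a two-dimensional region of \acpl\ with the interior of the associated full-dimensional normal cone, since it is precisely uniqueness of the minimizer there that converts $z \geq z'$ into $z > z'$. Everything else is the routine sign-chase above, so I expect this uniqueness step to be the only genuine obstacle.
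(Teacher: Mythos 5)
Your proposal is correct and follows essentially the same route as the paper: add the two optimality inequalities so the $y$ and $w$ terms cancel, obtain $(a-a')(x-x') + (c-c')(z-z') \leq 0$, and read off the sign. The paper simply asserts the strict inequalities $\alpha\cdot(\sv{v}-\sv{v'})<0$ and $\alpha'\cdot(\sv{v'}-\sv{v})<0$ directly from the interiority hypothesis, whereas you derive the weak version first and then upgrade via uniqueness of the minimizer in the interior of a full-dimensional normal cone -- the same underlying fact, just made explicit.
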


\begin{proof}
By assumption, $\alpha \cdot (\sv{v} - \sv{v'}) < 0$ and 
$\alpha' \cdot (\sv{v'} - \sv{v}) < 0$ for $\sv{v} = (x, y, z, w)$, 
$\sv{v'} = (x', y', z', w')$, $\alpha = (a, b_0, c, 1)$, and 
$\alpha' = (a', b_0, c', 1)$.
Hence, $(a - a')(x - x') + (c - c')(z - z') < 0$.
\end{proof}

In comparison to Proposition~\ref{asympmax}, 
Proposition~\ref{boundary} says that the number of branch points
in the corresponding optimal signatures 
increases monotonically whenever a region boundary is crossed 
as $a \rightarrow -\infty$ for fixed $c$ in \acpl.
Analogous statements can be made for the total number of branches,
and for minimization.

The proof also shows that a particular combination of $x$ and $z$ can be 
associated with at most one region of \acpl with a nonempty interior.

The consequence of Thereom~\ref{negslope} is that as we traverse 
the unbounded regions of \acpl\ counter-clockwise from $(0,0,0, w_0)$
around to $(x_{max}, y_m, z_{max}, w_m)$, we see a correlated increase 
in both $x$ and $z$. Similarly, with a clockwise traversal. The difference is that counter-clockwise, the number of branches is minimized
whereas clockwise it is maximized (subject to some other conditions).

The cones which correspond to $x_{max}$ and intersect \acpl\ all produce unbounded regions in \acpl. In particular, we have the following wedge.
\begin{proposition}
For each $b_0 \in \R$, there exist $y, w \in \R$ such that 
$\cone(x_{max}, y, z_{min}(x_{max}), w)$ is an unbounded wedge in \acpl.
\label{xmaxzmin}
\end{proposition}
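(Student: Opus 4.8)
The plan is to run the proof of Proposition~\ref{xmaxzmax} again, essentially verbatim, with the single sign change that now we \emph{minimize} the number of branches at $x_{max}$ rather than maximize it; this flips the favorable direction for the parameter $c$, so we drive $c_0 \to +\infty$ instead of $c_0 \to -\infty$ and the wedge lands in the NW rather than the SW part of \acpl. Concretely, I set $x = x_{max}$, $z = z_{min}(x_{max})$, and choose $y, w$ so that $\sv{v} = (x, y, z, w) \in \sigs$ with $b_0 y + w$ as small as possible. This choice disposes of every competitor sharing the coordinates $(x_{max}, z_{min}(x_{max}))$ automatically, since for such $\sv{v'}$ one has $\alpha \cdot (\sv{v} - \sv{v'}) = (b_0 y + w) - (b_0 y' + w') \leq 0$ for $\alpha = (a_0, b_0, c_0, 1)$.

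It then suffices to check $\alpha \cdot (\sv{v} - \sv{v'}) \leq 0$ against two families of $\sv{v'} = (x', y', z', w')$: those with $x' = x_{max}$ and $z < z' \leq z_{max}(x_{max})$, and those with $x_{max} > x' \geq 0$. Assume first $b_0 \geq 0$ and use $y' \geq 0$, $w' \geq w_{min}$, $z' \leq z_{max}$. For the first family the worst case is $z' = z+1$, and it reduces to the lower threshold $c_0 \geq (b_0 y + w) - w_{min} \geq 0$; for the second family, bounding $x_{max} - x' \geq 1$ and $z - z' \leq z_{max}$, it reduces to $a_0 + c_0 z_{max} \leq w_{min} - b_0 y - w$, i.e.\ $a_0$ sufficiently negative once $c_0$ is fixed. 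For $b_0 < 0$ one replaces the lower bound $w_{min}$ on $b_0 y' + w'$ by $w_{min} + b_0(n - y')$ via $y' \leq n$, exactly as in Proposition~\ref{xmaxzmax}, and the two thresholds become $c_0 \geq (w - w_{min}) + b_0(y - n)$ and $a_0 + c_0 z_{max} \leq w_{min} - w + b_0(n - y)$. In either case the resulting region is nonempty and contains points with $a_0 \to -\infty$ and $c_0 \to +\infty$, so $\sv{v} \in \optsig$ and $\cone(x_{max}, y, z_{min}(x_{max}), w) \cap \acpl$ is nonempty and unbounded.

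To upgrade ``unbounded'' to ``wedge'' I argue that the recession cone of the cell is a pointed two-dimensional cone. Since $x = x_{max}$, the cell is unbounded due west exactly as in Proposition~\ref{asympmax}, so $(-1, 0)$ is a recession direction; and the explicit inequalities above show $(-z_{max}, 1)$ is also a recession direction (moving along $a_0 + c_0 z_{max} = \text{const}$ keeps the second-family bound tight while only increasing $c_0$). Because a ray contained in the constructed sub-region remains in the true cell, both directions lie in the recession cone of the true cell, which is therefore at least two-dimensional, and the two directions are linearly independent, ruling out a stripe. The general bounds recalled before Proposition~\ref{boundary} then confine the cone: assuming $0 < z_{min}(x_{max}) < z_{max}$, the cell is bounded due north and due south, and since $x_{max} > 0$ it is bounded due east, so the recession cone avoids the vertical and eastern rays and is pointed. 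A pointed two-dimensional recession cone with two distinct infinite-edge directions is precisely a wedge; note this is consistent with Theorem~\ref{negslope}, as the western edge is horizontal and the NW edge has negative slope. The degenerate case $z_{min}(x_{max}) = z_{max}$ simply coincides with Proposition~\ref{xmaxzmax}.

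The routine part is the optimality computation, which is a sign-mirrored copy of Proposition~\ref{xmaxzmax} and needs no new ideas. The step I expect to require the most care is the geometric classification, namely certifying that the cell is a genuine wedge and not a stripe or, in a pathological polytope, a half-plane-like cell opening past due west. The clean way around this is the recession-cone argument above: exhibit the two independent recession directions directly from the inequalities, then invoke the already-established general bounds (west-unbounded but north/south/east-bounded) to conclude the recession cone is pointed and exactly two-dimensional. The one hypothesis to track is $z_{min}(x_{max}) < z_{max}$, which is what makes the due-south bound genuinely apply in the nondegenerate case.
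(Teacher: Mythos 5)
Your proof is correct and takes essentially the same route as the paper's: the same representative signature chosen to minimize $b_0 y + w$, the same two-family case split on competitors ($x' = x_{max}$ with $z' > z$ versus $x' < x_{max}$), and the same style of explicit linear constraints on $(a_0, c_0)$ carving out an unbounded subregion of the cell in the NW. The only differences are cosmetic: you use $z_{max}$ in place of $z$ in the second constraint, yielding the recession direction $(-z_{max}, 1)$ rather than the paper's $(-1/z)$-slope ray (either suffices), and you make the pointedness and two-dimensionality of the recession cone explicit where the paper simply reads off the two infinite directions from its inequalities.
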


\begin{proof}

Let $x = x_{max}$, $z = z_{min}(x_{max})$ and $y, w$ be such that 
$\sv{v} = (x, y, z, w) \in \sigs$ and $b_0 y + w$ is the least possible for 
the given $b_0$. Let $\alpha = (a_{0}, b_{0}, c_{0},1) \in \acpl$ and
$\sv{v'} = (x',y',z',w') \in \sigs$. Recall that $n$ is the length of sequence $s$ and $w_{min}$ is minimal
over all signatures for $s$. Hence $w \geq w_{min}$ and $n \geq y \geq 0$.

We claim $(x,y,z,w)$ is optimal for parameters 
$\alpha = (a_{0}, b_{0}, c_{0},1) \in \acpl$ where $a_0$ and $c_0$ satisfy
the constraints below. By the choice of $y$ and $w$, we may assume that for $\sv{v'} = (x',y',z',w') \in \sigs$ either $x' = x$ and $z' \geq z + 1$ or $x > x' \geq 0$.

If $b_0 \geq 0$, $w_{min} - w - b_0 n \leq 0$. Let $a_{0}, c_{0}$ be such that
 \[a_{0} \leq 0, \; c_{0} \geq   b_{0}n+w - w_{min}, \; 
a_{0} +c_{0}z \leq w_{min} - w - b_{0} n.\]
If $x' = x$ and $z' \geq z + 1$, $\alpha \cdot (\sv{v} - \sv{v'}) \leq 0$ because of the upper bound for $c_{0}$ by comparison with $(x, 0, z+1, w_{min})$. If $x > x' \geq 0$, then $\alpha \cdot (\sv{v} - \sv{v'}) \leq 0$ by comparison with $(x-1, 0, 0, w_{min})$ due to the choice of $a_0$ and $c_0$ and the fact that $c_{0} \leq 0$.

Now suppose $b_{0} <0$. 
Then $w_{min} - w + b_0 (n-y) \leq 0$.
Let $a_{0}, c_{0}$ be such that
 \[a_{0} \leq 0, \; c_{0} \geq   b_{0}(y-n)+w - w_{min}, \; 
a_{0} +c_{0}z \leq w_{min} - w + b_{0}(n-y).\] 
In this case $\alpha \cdot (\sv{v} - \sv{v'}) \leq 0$
follows from the choice of $c_0$ by comparison with 
$\alpha \cdot (x, n, z + 1, w_{min})$ if $x = x'$ and $z' \geq z + 1$
and from the choice of $a_0$ and $c_0$ by comparison with 
$\alpha \cdot (x-1, n, 0, w_{min})$ if $x > x' \geq 0$. 
\end{proof}

By the choice of parameters, we see that 
$\cone(x_{max}, y, z_{min}(x_{max}), w)$ always intersects the NW 
quadrant of \acpl.
From the choosen $a_0$ and $c_0$, we know that it is unbounded to the 
west along the line $(a_0 - t, b_0, c_0, 1)$ and, 
since $z = z_{min}(x_{max})> 0$ for $x_{max} > 0$, 
to the NW  along the line $(a_0 - t, b_0, c_0 + t/z, 1)$ for $t \geq 0$. 

Moreover, one can readily see that if $z_{min}(x_{max}) \neq z_{max}(x_{max})$ and  $\cone(x_{max}, y, z, w)$ contains $(a_{0}, b_{0}, c_{0}, 1)$ for $z_{min}(x_{max}) < z < z_{max}(x_{max})$  then it contains the whole ray $(a_0 - t, b_0, c_0, 1)$, for $t \geq 0$. Therefore $\cone(x_{max}, y, z, w)$ is an unbounded stripe between the wedges $\cone(x_{max}, y, z_{min}(x_{max}), w)$ and $\cone(x_{max}, y, z_{max}(x_{max}), w)$.

We now show that, for a given number of branch points, 
having the minimal number of branches possible is necessary for
signatures which correspond to regions that are unbounded to the NW. Dually, for a given number of branch points, having the maximal number of branch points possible is necessary for the corresponding region to be unbounded to the NW.

\begin{proposition}

Let $(x,y,z,w) \in \optsig$ such that  $R = \cone(x,y,z,w) \cap \acpl \neq \emptyset$.
If $x < x_{max}(z)$ or $z > z_{min}(x)$, then $R$ is bounded to the NW.

\label{northwest}
\end{proposition}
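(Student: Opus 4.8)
The plan is to work directly in the $(a,c)$ plane and to read off the unbounded directions of $R$ from its recession cone. After substituting $b = b_0$ and $d = 1$, optimality of $(x,y,z,w)$ forces the region
\[ R = \cone(x,y,z,w) \cap \acpl = \Big\{ (a,c) : a(x - x') + c(z - z') \le b_0(y' - y) + (w' - w) \ \text{ for all } (x',y',z',w') \in \sigs \Big\}, \]
one inequality for each competing signature. Since the right-hand sides do not involve $a$ or $c$, the recession cone of $R$ is
\[ \operatorname{rec}(R) = \big\{ (da, dc) : da\,(x - x') + dc\,(z - z') \le 0 \ \text{ for all } (x',y',z',w') \in \sigs \big\}. \]
Because $R \ne \emptyset$, it contains a ray in direction $(da,dc)$ exactly when $(da,dc) \in \operatorname{rec}(R)$, so ``$R$ bounded to the NW'' amounts to showing that $\operatorname{rec}(R)$ contains no direction in the open NW quadrant, i.e.\ no $(da,dc)$ with $da < 0$ and $dc > 0$.

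The key observation is that a single well-chosen competing signature already eliminates all such directions, since each signature contributes one constraint to the intersection defining $\operatorname{rec}(R)$. I would treat the two hypotheses separately. If $x < x_{max}(z)$, the definition of $x_{max}(z)$ produces a signature $(x^*, y^*, z, w^*) \in \sigs$ with the same number of branches $z$ but $x^* > x$; its recession constraint reads $da\,(x - x^*) \le 0$, and as $x - x^* < 0$ this forces $da \ge 0$, killing every direction with $da < 0$. Dually, if $z > z_{min}(x)$, the definition of $z_{min}(x)$ produces $(x, y^{**}, z^{**}, w^{**}) \in \sigs$ with the same number of branch points $x$ but $z^{**} < z$; its recession constraint $dc\,(z - z^{**}) \le 0$ forces $dc \le 0$, again excluding the entire open NW quadrant. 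Either hypothesis alone therefore yields the claim.

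I expect no genuine obstacle, only a point of care worth spelling out: one must check that the $y$- and $w$-coordinates of the witnesses enter solely through the constant offsets $b_0(y'-y)+(w'-w)$ and hence drop out of $\operatorname{rec}(R)$, which is what legitimizes reading the sign condition on $(da,dc)$ from the $(x,z)$-coordinates alone. The conceptual crux is why the \emph{conditional} extremes $x_{max}(z)$ and $z_{min}(x)$, rather than the global $x_{max}$ and $z_{min}$, are the right quantities: a witness obtained merely from $x < x_{max}$ could differ from $(x,y,z,w)$ in both $x$ and $z$, giving the mixed constraint $da\,(x-x')+dc\,(z-z')\le 0$, which does not isolate a single sign. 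Matching the remaining coordinate of the witness to that of the vertex, the branch count $z$ in the first case and the branch-point count $x$ in the second, is exactly what collapses the constraint to a pure bound on $da$ (respectively $dc$), and this is the heart of the argument.
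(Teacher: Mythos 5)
Your proof is correct and follows essentially the same route as the paper: both arguments pick the witness $(x_{max}(z), y^*, z, w^*)$ or $(x, y^{**}, z_{min}(x), w^{**})$ whose matching coordinate cancels one of the two linear terms, reducing the question to a single sign condition on the NW direction. The only difference is presentational --- you package the paper's ``take $t$ large along the ray $(a_0 - t, b_0, c_0 + mt, 1)$'' computation as a recession-cone membership test, and in doing so you actually establish the slightly stronger conclusions that $R$ is bounded due west when $x < x_{max}(z)$ and due north when $z > z_{min}(x)$.
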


\begin{proof}
Suppose $\sv{v} = (x, y, z, w)$ is optimal for $\alpha = (a_0, b_0, c_0, 1)$
where  $z > z_{min}(x)$.
By definition,
there exist $y', w'$ such that $\sv{v'} = (x, y', z', w') \in \sigs$
for $z' = z_{min}(x)$. 

Let $m > 0$ and consider $\alpha' = (a_0 - t, b_0, c_0 + m t, 1)$ 
for $t > 0$.
Then 
\[ \alpha'(\sv{v} - \sv{v'}) = b_0 (y - y') + (c_0 + m t)(z - z') +
(w - w') = \alpha(\sv{v} - \sv{v'}) + m t (z - z') > 0 \] 
for $mt > 0$ sufficiently large since $\alpha(\sv{v} - \sv{v'})$ is fixed
and $z - z' > 0$.
Hence, $R$ cannot contain the 
ray $l = (a_0 - t, b_0, c_0 + m t, 1)$ for $t \geq 0$. We get the same contradiction if $x < x_{max}(z)$ and we consider a point $\sv{v''} = (x'', y'', z, w'') \in \sigs$ with $x'' = x_{max}(z)$.
\end{proof}

We know that $z_{min}(x) \geq 3 x$ since a branch point must
have at least three branches, by definition.
Hence, a minimally branched structure resembles a binary tree in the sense 
that each branch point has exactly two children.
We note, though, that this says nothing about nonbranching vertices 
and also that the root does not count as a branch point for 
our purposes, since its energy function has no entropic penalty.

As far as we have seen, this lower bound on the total number of branches
is always achieved by some signature having the maximum number of branch 
points, and it again has interesting geometric implications.

\begin{observation}
We have $z_{min}(x_{max}) = 3 x_{max}$.
\label{observmin}
\end{observation}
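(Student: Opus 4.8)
The plan is to prove the claimed equality by splitting it into its two inequalities. The inequality $z_{min}(x_{max}) \geq 3x_{max}$ is immediate from the discussion preceding the statement: in the reduced tree we may write $z = \sum_v \deg(v)$, where the sum runs over the branch points $v$ (the internal non-root nodes), since the helices incident to a multibranch loop are exactly the edges at the corresponding node. As every branch point is a multibranch loop and hence meets at least three helices, each of the $x_{max}$ branch points contributes $\deg(v) \geq 3$, so $z \geq 3x_{max}$ for every signature with $x = x_{max}$. Thus the real content of the observation is the reverse inequality: the existence of a single secondary structure over $s$ with exactly $x_{max}$ branch points, each of degree exactly three.

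To produce such a structure I would start from any structure $S$ realizing $x = x_{max}$ (one exists by the definition of $x_{max}$) and try to show $S$ can be chosen so that no multibranch loop meets four or more helices. The natural tool is a local split: given a multibranch loop $v$ with $\deg(v) = d \geq 4$, pick two helices that are consecutive around $v$ and enclose them under a new base pair formed from single-stranded positions of $v$ flanking those two helices. This creates a new degree-three multibranch loop (the new closing helix together with the two enclosed helices) and reduces the degree of $v$ to $d-1 \geq 3$, so that $v$ remains a branch point. The net effect is to increase the number of branch points by one, and iterating such splits drives every branch point down to degree three, yielding a binary maximally branched structure with $z = 3x_{max}$.

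The hard part — and the reason the claim is offered as an observation rather than a theorem — is precisely the realizability of this split for a given sequence. Forming the new closing helix requires two complementary, pairing-compatible nucleotides available among the single-stranded positions of $v$, positioned so that the new pair is non-crossing and respects the minimum hairpin- and loop-length constraints; whether such a pair exists is sequence-dependent, and if the helices at $v$ stack with no intervening unpaired bases there may be nothing to pair at all. Note also a sanity check on why the split cannot be universally available: were it always realizable, no maximally branched structure could contain a loop of degree $\geq 4$, which would force $z_{max}(x_{max}) = 3x_{max}$ as well, contradicting that $z_{max}(x_{max}) = z_{max}$ is typically much larger (cf. Observation~\ref{observmax}). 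The correct reading is therefore that one needs only \emph{some} sequence of legal splits to terminate at a binary maximally branched structure; verifying this for the sequences at hand is done computationally, and a fully general proof would require a combinatorial condition on $s$ guaranteeing that at least one high-degree loop in some optimal structure admits an admissible closing pair.
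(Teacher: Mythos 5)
The paper offers no proof of Observation~\ref{observmin}. It is presented as an empirical fact (``As far as we have seen, this lower bound on the total number of branches is always achieved by some signature having the maximum number of branch points''), checked computationally for the sequences studied, and its general justification is explicitly left as an open question in the conclusion alongside Observation~\ref{observmax}. The only portion the paper argues is the inequality $z_{min}(x) \geq 3x$, which it justifies exactly as you do: every branch point is a multibranch loop and hence meets at least three helices. Your write-up is therefore consistent with the paper's treatment and in fact goes beyond it. The split operation you describe is the correct inverse of the stem-breaking move the paper uses later (when discussing property~\eqref{1.8} and the corollary following the observation): it raises $x$ by one and $z$ by two, and your identification of the obstruction --- the sequence-dependent availability of an admissible, non-crossing closing pair among the unpaired positions of the loop --- is precisely why the claim is stated as an observation rather than a proposition. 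Your sanity check is also sound: if the split were always realizable, every signature with $x = x_{max}$ would have all loops of degree three, forcing $z_{max}(x_{max}) = 3x_{max}$ and contradicting the typical situation $z_{max}(x_{max}) = z_{max} \gg 3x_{max}$. In short, you have not proved the statement, but neither does the paper; what you have written is an accurate diagnosis of which half is trivial, which half is sequence-dependent, and why a general proof would require an additional combinatorial hypothesis on $s$.
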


\begin{corollary} If Observation~\ref{observmin} holds then, for every $0 < x < x_{max}$, $z_{min}(x) = 3x$.
\end{corollary}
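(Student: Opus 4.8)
The plan is to realize every value $z_{min}(x)$ for $0 < x < x_{max}$ by explicitly constructing a secondary structure over $s$ with $x$ branch points and exactly $3x$ branches, starting from the single structure handed to us by Observation~\ref{observmin} and peeling off one branch point at a time. Since $z_{min}(x) \ge 3x$ always holds (each branch point needs at least three incident helices), producing one structure with $z = 3x$ is enough to conclude $z_{min}(x) = 3x$.

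First I would fix a structure $S$ witnessing $z_{min}(x_{max}) = 3x_{max}$ and pass to its reduced rooted plane tree $T$. In this representation the branch points are exactly the non-root internal nodes, a helix is an edge, and $z = \sum_{v} \deg_T(v)$ summed over branch points (an edge joining two branch points is counted in both degrees, matching the ``counted twice'' rule). Because $z = 3x_{max}$ and every branch point has degree at least $3$, every branch point of $T$ must have degree exactly $3$, i.e.\ exactly two children, so $T$ is essentially a binary tree. I would then argue by downward induction on $x$: given a structure whose reduced tree has $x$ branch points, all of degree $3$ (with $1 < x \le x_{max}$), select a branch point $v$ of maximum depth. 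Its two children cannot themselves be branch points (they would be deeper), and the reduced tree has no degree-$2$ internal nodes, so both children of $v$ are hairpin leaves and $v$ is a ``cherry.'' Deleting the base pairs of the two helices hanging off $v$ makes $v$ a hairpin, drops the branch-point count to $x - 1$, removes $3$ from $z$, and leaves every other loop (hence every other branch point's degree) untouched; thus the new tree has $x-1$ branch points, all of degree $3$, i.e.\ $z = 3(x-1)$. Iterating down to $x = 1$ gives the claim.

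The one genuine obstacle is that we are not free to edit trees combinatorially: every intermediate object must be an \emph{honest} secondary structure over the sequence $s$, i.e.\ a non-crossing partial matching using only allowed base pairs. This is exactly why I route the pruning through \emph{deletion of base pairs}: removing edges from a non-crossing partial matching leaves a non-crossing partial matching, and any subcollection of allowed pairs is still allowed, so validity is automatic and no sequence-specific feasibility question arises. The remaining work---checking that deleting the two child helices of a maximum-depth branch point lowers $x$ by exactly one and keeps the other degrees equal to $3$---is routine bookkeeping on $T$, so I do not expect any further difficulty.
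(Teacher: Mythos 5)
Your proposal is correct and follows essentially the same route as the paper: start from a structure witnessing $z_{min}(x_{max})=3x_{max}$, note that every branch point then has exactly two children, and repeatedly prune a deepest (``cherry'') branch point by deleting base pairs, which stays within the set of valid secondary structures and lowers $(x,z)$ by $(1,3)$ at each step. Your version is in fact slightly more careful than the paper's wording, since you delete only the two child helices (so the closing helix survives, the loop becomes a hairpin, and the parent's degree is untouched), whereas unpairing all three branches meeting the node would also disturb the parent's branch count.
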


\begin{proof}
Suppose $S$  is a structure with signature $(x,y,3x,w)$ for some $0 < x \leq x_{max}$, $y,w \in \R$, then in the rooted tree representation of $S$, all its branching points have exactly two children. Take one of the branching points whose children are leaves (i.e. not branching points themselves) and unpair all the basepairs in the branches that meet at that node. The resulting structure has $x-1$  branching points and $3x -3$ branches.  
\end{proof}

In this case, corresponding to a signature having the minimal number of branches possible is also sufficient for
the region to be unbounded to the NW.

\begin{proposition}  
Suppose $(x, y, 3x, w) \in \optsig$ with 
$R = \cone(x,y,3x,w) \cap \acpl \neq \emptyset$.
Then $R$ is unbounded to the northwest.
\label{3xmin}
\end{proposition}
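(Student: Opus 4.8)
The plan is to prove unboundedness directly, by producing a single northwest ray that lies entirely inside $R$; this is the sufficiency counterpart to the necessity argument in Proposition~\ref{northwest}. Since $R \neq \emptyset$, I would start from any parameter point $(a_0, b_0, c_0, 1) \in R$, so that $\sv{v} = (x, y, 3x, w)$ is optimal there, i.e. $(a_0, b_0, c_0, 1) \cdot (\sv{v} - \sv{v'}) \leq 0$ for all $\sv{v'} \in \sigs$. I would then follow the ray
\[ \ell(t) = \left(a_0 - t,\; b_0,\; c_0 + \tfrac{t}{3},\; 1\right), \qquad t \geq 0, \]
which keeps $b = b_0$ and $d = 1$ and moves due northwest in the $(a,c)$-plane, and show that $\sv{v}$ remains optimal for $\ell(t)$ at every $t \geq 0$.

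The verification is a single linear-in-$t$ estimate. For an arbitrary $\sv{v'} = (x', y', z', w') \in \sigs$,
\[ \ell(t) \cdot (\sv{v} - \sv{v'}) = (a_0, b_0, c_0, 1)\cdot(\sv{v} - \sv{v'}) + t\left[-(x - x') + \tfrac{1}{3}(3x - z')\right]. \]
The constant term is $\leq 0$ by optimality at $t = 0$. The coefficient of $t$ simplifies to $\tfrac{1}{3}(3x' - z')$, which is $\leq 0$ because every signature satisfies $z' \geq 3x'$ (each of the $x'$ branch points is incident with at least three helices). Hence $\ell(t)\cdot(\sv{v} - \sv{v'}) \leq 0$ for all $t \geq 0$ and all $\sv{v'} \in \sigs$, so $\sv{v}$ is optimal along the whole ray, $\ell(t) \in R$ for all $t$, and $R$ is therefore unbounded to the northwest.

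The one genuinely load-bearing choice --- and the place where I expect the real content to sit --- is the slope $1/3$ of the ray, namely the reciprocal of the minimum number of branches per branch point. It is precisely this value that makes the $t$-coefficient $\tfrac{1}{3}(3x' - z')$ non-positive uniformly over all competing signatures via $z' \geq 3x'$, and it is in fact the only slope that works for every sequence: a larger slope can be overtaken far out by a minimally branched signature with fewer branch points ($x' < x$, $z' = 3x'$), while a smaller one can be overtaken by a minimally branched signature with more branch points ($x' > x$, $z' = 3x'$). Everything else is the routine optimality inequality at the base point, and it is worth noting that the argument is self-contained: it uses neither Observation~\ref{observmin} nor the (automatic, since $z = 3x$) equality $x = x_{max}(z)$.
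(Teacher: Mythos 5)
Your proof is correct and follows essentially the same route as the paper: the same northwest ray $(a_0 - t,\, b_0,\, c_0 + t/3,\, 1)$, with optimality preserved along it because the $t$-coefficient reduces to $\tfrac{1}{3}(3x' - z') \leq 0$ via the universal bound $z' \geq 3x'$. Your added remark that the slope $1/3$ is the unique choice that works uniformly is a nice observation, but the core argument is identical to the paper's.
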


\begin{proof}
Suppose $\sv{v} = (x,y,3x,w)$ is optimal for 
$\alpha = (a_{0}, b_{0}, c_{0}, 1)$.
Let $\alpha' = (a_{0}-t, b_{0}, c_{0}+\frac{t}{3},1)$ for $t > 0$
and $\sv{v'} = (x',y',z',w') \in \optsig$. 
Since $\alpha(\sv{v} - \sv{v'}) \leq 0$, 
then $\alpha'(\sv{v} - \sv{v'}) \leq 0$ follows from
\[-tx'+\frac{t}{3}z' \geq -tx + \frac{t}{3}3x
 \]
which is equivalent to \(z' \geq 3x'\).
\end{proof}

We begin our characterization of the east half of \acpl\  with a result analogous to Proposition~\ref{xmaxzmin}; the proof is a straight-forward dualization..
\begin{proposition}
For each $b_0 \in \R$, there exist $y, w \in \R$ such that 
$\cone(x_{min}(z_{max}), y, z_{max}, w)$ is an unbounded wedge in \acpl.
\label{zmaxxmin}
\end{proposition}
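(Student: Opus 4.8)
The plan is to transcribe the proof of Proposition~\ref{xmaxzmin} under the involution that exchanges the pair $(x,a)$ with the pair $(z,c)$. First I would fix $z = z_{max}$ and $x = x_{min}(z_{max})$, and choose $y, w$ so that $\sv{v} = (x,y,z,w) \in \sigs$ with $b_0 y + w$ least possible for the given $b_0$; this is the verbatim dual of the opening choice in Proposition~\ref{xmaxzmin}, where one took $x = x_{max}$ and $z = z_{min}(x_{max})$. As there, this minimality lets me reduce the verification to two cases for a competing $\sv{v'} = (x',y',z',w') \in \sigs$: either $z' = z$ with $x' \geq x+1$, or $z > z' \geq 0$. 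Indeed, if $z' = z = z_{max}$ then $x' \geq x_{min}(z_{max}) = x$, and the subcase $x' = x$ is ruled out because minimality of $b_0 y + w$ makes such a $\sv{v'}$ no better; and if $z' \neq z$ then $z' < z$ since $z' \leq z_{max} = z$. This mirrors the reduction ``$x' = x$ with $z' \geq z+1$, or $x > x' \geq 0$'' used before.

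Next I would split on the sign of $b_0$ and impose the dualized constraints on $(a_0, c_0)$. For $b_0 \geq 0$ I would require
\[ c_0 \leq 0, \quad a_0 \geq b_0 n + w - w_{min}, \quad c_0 + a_0 x \leq w_{min} - w - b_0 n, \]
obtained from the three constraints of Proposition~\ref{xmaxzmin} by the swap. Optimality in the first case ($z' = z$, $x' \geq x+1$) then follows by comparison with $(x+1, 0, z, w_{min})$, using the lower bound on $a_0$; and in the second case ($z > z' \geq 0$) by comparison with $(0,0,z-1,w_{min})$, using the third constraint together with $c_0 \leq 0$. For $b_0 < 0$ I would replace the coordinate $0$ by $n$ in both comparison signatures and use $a_0 \geq b_0(y-n) + w - w_{min}$ and $c_0 + a_0 x \leq w_{min} - w + b_0(n-y)$, paralleling the $b_0 < 0$ branch line for line.

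Finally I would read off the geometry. The constraints force $a_0 \geq 0$ and $c_0 \leq 0$, so the cone meets the SE quadrant of \acpl, the reflection across the diagonal $a = c$ of the NW placement found in Proposition~\ref{xmaxzmin}. Since $z = z_{max}$ the region is unbounded due south along $(a_0, b_0, c_0 - t, 1)$, and because $x = x_{min}(z_{max}) > 0$ (as $z_{max} > 0$ forces at least one multibranch loop) it is also unbounded to the SE along $(a_0 + t/x, b_0, c_0 - t, 1)$ for $t \geq 0$; these two distinct infinite-edge directions make it a wedge rather than a stripe.

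I expect no essential obstacle: the difficulty is purely bookkeeping. The one thing genuinely worth checking is that the original proof never invokes the true asymmetry between the coordinates, namely $z \geq 3x$ forced by $z_{min}(x) \geq 3x$. A quick audit confirms that Proposition~\ref{xmaxzmin} uses only $w \geq w_{min}$, $0 \leq y \leq n$, $x' \geq 0$, and $z' \geq 0$, all symmetric in $x$ and $z$, so the dualization is legitimate. The only nondegeneracy needed is $z_{max} > 0$, which guarantees $x_{min}(z_{max}) > 0$ and hence a genuinely two-dimensional wedge.
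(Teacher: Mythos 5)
Your proposal is correct and matches the paper, which proves Proposition~\ref{zmaxxmin} exactly by declaring it ``a straight-forward dualization'' of Proposition~\ref{xmaxzmin}; you have simply carried out that dualization explicitly, including the useful sanity check that the original argument never uses the asymmetric fact $z \geq 3x$. The only caveat, which the paper shares, is the implicit nondegeneracy assumption $z_{max} > 0$ needed to get two distinct infinite-edge directions.
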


By duality, $\cone(x_{min}(z_{max}), y, z_{max}, w)$ is partly in the SE quadrant. In the examples of RNA sequences we have seen~\cite{inprep}, $x_{min}(z_{max}) = x_{max}$, and  this wedge coincides with $\cone(x_{max}, y_{m}, z_{max}, w_{m})$. However,  this is not true for all sequences. For example, for the sequence $(\G\A\C\A\A\A)^6$, $z_{max}=6$, $x_{max}=2$, but $x_{min}(6)=1$. Regardless, if the sequence is long enough so that $x_{max}>1$, the SE quadrant is guaranteed to contain at least one more unbounded wedge that we haven't mentioned so far, which we show next.  

\begin{proposition}
If $x_{max} \geq 1$, then 
for each $b_0 \in \R$, there exist $y, w \in \R$ such that 
$\cone(1, y, z_{max}(1), w)$ is an unbounded wedge in \acpl.
\label{1wedge}
\end{proposition}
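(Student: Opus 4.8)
The plan is to follow the template of Propositions~\ref{xmaxzmax} and~\ref{xmaxzmin}; the one genuinely new feature is that the vertex we track now has competitors with \emph{more} branch points as well as fewer. Fix $x = 1$ and $z = z_{max}(1)$, and among all signatures with these two coordinates choose $\sv{v} = (1, y, z_{max}(1), w) \in \sigs$ so that $b_0 y + w$ is least possible for the given $b_0$; this is the least-cost signature with $x = 1$, $z = z_{max}(1)$ for that $b_0$. Writing $\alpha = (a_0, b_0, c_0, 1)$, I would split an arbitrary $\sv{v'} = (x', y', z', w') \in \sigs$ into three classes: (I) $x' = 1$ with $z' \leq z_{max}(1) - 1$; (II) $x' = 0$, where necessarily $y' = z' = 0$ and the strongest competitor is $(0,0,0,w_0)$; and (III) $x' \geq 2$.

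I would then produce the defining inequalities on $(a_0, c_0)$ as in the earlier proofs. Class (I), compared against $(1, 0, z_{max}(1) - 1, w_{min})$ (with the $y$-slot set to $n$ when $b_0 < 0$), yields an upper bound $c_0 \leq w_{min} - w - b_0 y =: C \leq 0$, giving a finite top edge. Class (II), compared against $(0,0,0,w_0)$, yields the half-plane $a_0 + z_{max}(1)\,c_0 \leq w_0 - w - b_0 y$. Class (III) I would treat as a \emph{family}: for each $x' \geq 2$ compare $\sv v$ against a least-cost signature with exactly $x'$ branch points, which has $z' = z_{max}(x')$; this produces a half-plane $a_0 + \frac{z_{max}(x') - z_{max}(1)}{x'-1}\,c_0 \geq \text{const}$. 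The class (III) bounds are the new ingredient and explain why this cone sits in the SE part of $\acpl$, where $a_0$ is large and $c_0 < 0$; exhibiting one feasible point (e.g. $c_0$ moderately negative, $a_0$ large) shows the region is nonempty, so $\sv v \in \optsig$ and its cone meets $\acpl$. By the argument in Proposition~\ref{boundary}, this pair $(x,z)$ occupies at most one region.

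The infinite edges can now be read off. The edge shared with $\cone(0,0,0,w_0)$ (Corollary~\ref{ZeroRegion}) comes from class (II) and points in direction $(z_{max}(1), -1)$; that this ray lies in the cone uses the bound $z_{max}(x) \leq x\,z_{max}(1)$ for every $x$, which I would prove combinatorially. In the reduced rooted tree of a structure with $x$ branch points, $L$ leaves, and root degree $r$, summing each branch point's degree (children plus one) gives $z = 2x + L - r$, while $\sum_i(\#\text{children of }i) = x + L - r$. Every multibranch loop is incident to at most $z_{max}(1)$ helices — replacing each branch of a loop by a minimal hairpin preserves that loop's degree without lengthening the sequence, so the maximal degree is attained by a lone loop — hence each branch point has at most $z_{max}(1) - 1$ children, giving $x + L - r \leq x(z_{max}(1) - 1)$ and therefore $z = 2x + L - r \leq x\,z_{max}(1)$.

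The main obstacle is the second infinite edge: showing the region is a genuine \emph{wedge} and not a \emph{stripe}. The edge shared with the clockwise-adjacent two-branch-point region points in direction $(z_{max}(2) - z_{max}(1), -1)$, which differs from $(z_{max}(1), -1)$ precisely when $z_{max}(2) < 2\,z_{max}(1)$; in general the dominant class (III) half-plane is the one maximizing $\frac{z_{max}(x')-z_{max}(1)}{x'-1}$, so the wedge condition is the \emph{strict} inequality $z_{max}(x) < x\,z_{max}(1)$ for all $2 \leq x \leq x_{max}$. By the computation above, equality would require a structure in which several branch points simultaneously attain the maximal degree $z_{max}(1)$, i.e.\ two or more disjoint maximal loops; this is the crux. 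It genuinely fails in the degenerate all-binary case $z_{max}(1) = z_{min}(1) = 3$, where the region collapses to a stripe, so the strict inequality — and with it the wedge — must be established under the standing assumption that a single branch point can carry strictly more than the minimal three helices. Granting it, the conclusion follows as in Proposition~\ref{xmaxzmin} by exhibiting the two rays $(z_{max}(1), -1)$ and $(z_{max}(2) - z_{max}(1), -1)$ in distinct directions, consistent with the negative-slope constraint of Theorem~\ref{negslope}.
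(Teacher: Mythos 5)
Your overall template does match the paper's proof of Proposition~\ref{1wedge}: it too fixes the least-cost signature $\sv{v}=(1,y,z_{max}(1),w)$ for the given $b_0$ and writes down linear constraints on $(a_0,c_0)$ by comparison with relaxed competitors, with the cases split by the number of branch points. But your execution diverges in two ways. First, you introduce the competitor class $x'=0$ and the resulting cap $a_0+z_{max}(1)\,c_0\le w_0-w-b_0y$; the paper's case analysis passes directly from ``$x'=1$, $z'\le z_{max}(1)-1$'' to ``$x'\ge 2$'' and imposes no such cap, so the region it exhibits keeps the due-east recession direction $(1,0)$ alongside $(z_{max}-z_{max}(1),-1)$, and ``wedge'' is immediate from those two distinct directions. (Your extra constraint is not gratuitous: $(0,0,0,w_0)$ does beat $(1,y,z_{max}(1),w)$ once $a_0$ is large with $c_0$ fixed, so some eastward cap is unavoidable; this is a real point of difference with the paper's argument.) Second, the paper disposes of all $x'\ge 2$ with the single comparison against $(2,0,z_{max},w_{min})$, valid since $a_0\ge 0$ and $c_0\le 0$, rather than your per-$x'$ family of half-planes; the cruder bound is enough because one only needs to place \emph{some} two-dimensional wedge inside the cone, not to compute its exact recession cone.

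The genuine gap is the one you flag yourself. Once the class~(II) cap is in place, your argument that the region is a wedge rather than a stripe rests on the strict inequality $z_{max}(x)<x\,z_{max}(1)$ for all $2\le x\le x_{max}$, which you do not prove and which does not follow from the proposition's sole hypothesis $x_{max}\ge 1$. What you do establish is the non-strict bound $z_{max}(x)\le x\,z_{max}(1)$ (correct, and more simply seen by noting that every multibranch loop has degree at most $z_{max}(1)$ and that $z$ is the sum of the loop degrees), and this is exactly what makes $(z_{max}(1),-1)$ a recession direction; but equality occurs precisely when some structure realizes $x$ loops each of maximal degree, your own all-binary example shows the two claimed edge directions then coincide, and nothing in the hypotheses rules this out. ``Granting it, the conclusion follows'' therefore leaves the statement proved only up to ``$\cone(1,y,z_{max}(1),w)\cap\acpl$ is a nonempty unbounded region,'' not ``is a wedge.'' To close the argument along your route you must either prove the strict inequality (presumably under an explicit extra hypothesis on the sequence, analogous to Observations~\ref{observmax} and~\ref{observmin}) or produce a second recession direction by a different comparison; note that the paper's own proof does not supply this missing ingredient, since it obtains its second direction only by omitting the $x'=0$ constraint.
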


\begin{proof}
Let $z_1 = z_{max}(1)$ 
and $y, w$ be such that $\sv{v} = (1, y, z_1, w) \in \sigs$ and 
$b_0 y + w$ is the least possible.
Let $\alpha = (a_0, b_0, c_0, 1) \in \acpl$ and 
$\sv{v'} = (x',y',z', w') \in \sigs$.
By choice of $b_0 y + w$,
we may assume that either $x' = 1$ and $z' \leq z_1 - 1$ or that $x' \geq 2$.

Suppose $b_0 \geq 0$.  
Then, as in previous proofs, $w_{min} - w - b_0 y \leq 0$.
Let $a_{0}, c_{0}$ be such that
 \[a_{0} \geq 0, \; c_{0} \leq w_{min} - w - b_0 y, \;
a_{0} + c_{0}(z_{max} - z_1) \geq b_0 y + w - w_{min}.\]
In this case $\alpha \cdot (\sv{v} - \sv{v'}) \leq 0$
follows from the choice of $c_0$ by comparison with
$\alpha \cdot (1, 0, z_1 - 1, w_{min})$ if $x' = 1$ and $z' \leq z_1 - 1$
and from the choice of $a_0$ and $c_0$ by comparison with
$\alpha \cdot (2, 0, z_{max}, w_{min})$ if $x' \geq 2$.

Now suppose $b_{0} < 0$.
Again we have $w_{min} - w + b_0 (n - y) \leq 0$.
Let $a_{0}, c_{0}$ be such that
 \[a_{0} \geq 0, \; c_{0} \leq  w_{min} - w  + b_{0}(n - y), \;
a_{0} +c_{0} (z_{max} - z_1) \geq b_0(y-n) + w - w_{min}.\]
Now  $\alpha \cdot (\sv{v} - \sv{v'}) \leq 0$
follows from the choice of $c_0$ by comparison with
$\alpha \cdot (1, n, z_1 - 1, w_{min})$ if $x' = 1$ and $z' \leq z_1 - 1$
and from the choice of $a_0$ and $c_0$ by comparison with
$\alpha \cdot (2, n, z_{max}, w_{min})$ if $x' \geq 2$.
\end{proof}

Result analogous to Proposition~\ref{northwest} also holds; the proof is a straight-forward dualization.


%

\begin{proposition}
Let $(x,y,z,w) \in \optsig$ such that  $R = \cone(x,y,z,w) \cap \acpl \neq \emptyset$.
If $x >x_{min}(z)$ or $z < z_{max}(x)$, then $R$ is bounded to the SE. 
\end{proposition}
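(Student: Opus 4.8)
The plan is to dualize the proof of Proposition~\ref{northwest} verbatim, interchanging the roles of maxima and minima and reversing the perturbation direction. Recall that ``bounded to the SE'' means that $R$, as a convex polyhedron, admits no infinite edge pointing into the open southeast quadrant; concretely, for no point $(a_0, b_0, c_0, 1)$ of $R$ and no slope $m > 0$ can $R$ contain the entire ray $(a_0 + t, b_0, c_0 - mt, 1)$, $t \geq 0$. So I would fix such a point, at which $\sv{v} = (x,y,z,w)$ is optimal for $\alpha = (a_0,b_0,c_0,1)$, fix $m > 0$, and set $\alpha' = (a_0 + t, b_0, c_0 - mt, 1)$; the goal is to exhibit a competing signature whose objective value beats that of $\sv{v}$ once $t$ is large, forcing the ray to exit $R$.

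The argument splits into the two disjuncts of the hypothesis, each handled by one component of the southeast direction. If $z < z_{max}(x)$, I compare with a signature $\sv{v'} = (x, y', z', w') \in \sigs$ having the same number of branch points and $z' = z_{max}(x) > z$; since the $x$-coordinates agree, the $a$-term drops out and
\[\alpha'(\sv{v} - \sv{v'}) = \alpha(\sv{v} - \sv{v'}) - mt(z - z'),\]
whose second summand is a positive multiple of $t$ because $z - z' < 0$. If instead $x > x_{min}(z)$, I compare with $\sv{v''} = (x'', y'', z, w'') \in \sigs$ having the same number of branches and $x'' = x_{min}(z) < x$; now the $z$-coordinates agree, the $c$-term drops out, and
\[\alpha'(\sv{v} - \sv{v''}) = \alpha(\sv{v} - \sv{v''}) + t(x - x''),\]
again growing linearly in $t$ with positive coefficient since $x - x'' > 0$. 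In either case $\alpha(\sv{v} - \cdot)$ is a fixed finite quantity, so for $t$ large enough the difference becomes strictly positive, contradicting the optimality of $\sv{v}$ along the ray; hence $R$ cannot contain the ray and is bounded to the SE.

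I do not expect a genuine obstacle here, since the content is entirely dual to Proposition~\ref{northwest} and the only points requiring care are bookkeeping. Specifically, I must track signs so that the $t$-linear term is strictly positive in each case, and I must match the two geometric components of the southeast direction to the correct coordinate perturbation: the eastward ($a$-increasing) component drives the $x$-comparison, while the southward ($c$-decreasing) component drives the $z$-comparison. Because the two comparisons are decoupled in this way, each disjunct of the hypothesis alone suffices, which is precisely why the ``or'' in the statement is the correct connective; moreover the case $x > x_{min}(z)$ rules out even the pure eastward ray ($m = 0$), mirroring the corresponding remark in the northwest case.
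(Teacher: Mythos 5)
Your proof is correct and is precisely the ``straight-forward dualization'' of Proposition~\ref{northwest} that the paper invokes without writing out: same perturbed ray, same two comparison signatures (one fixing $x$ with $z'=z_{max}(x)$, one fixing $z$ with $x''=x_{min}(z)$), same linear-in-$t$ contradiction. The sign bookkeeping checks out in both cases, so there is nothing to add.
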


Hence only the regions corresponding to signatures in the set 
\[\sigs_{0}:= \{(x,y,z,w) \in \sigs \colon x=x_{min}(z), z=z_{max}(x)\}\]
are candidates for regions in \acpl\ which are 
unbounded to the southeast. Moreover, using the same reasoning, $(x,y,z,w) \in \sigs_{0}$ is bounded unless
\begin{align}
z_{max}(x') < z \;\; &\text{ for all } \;\; 1 \leq x' < x, \label{property1}\\
x_{min}(z') > x \;\; &\text{ for all }  \;\; z < z' \leq z_{max} \label{property2}.
\end{align}
In fact,~\eqref{property1} and~\eqref{property2} are equivalent. To exclude the signatures from $\sigs_{0}$ that do not satisfy~\eqref{property1} and~\eqref{property2}, it is sufficient to check against points from $\sigs_{0}$. Namely, $(x,y,z,w) \in \sigs_{0}$ satisfies~\eqref{property1} and~\eqref{property2} if and only if
\begin{equation} \label{property3}
\text{for every } \; \; (x',y',z',w') \in \sigs_{0},  \;\; x' < x \iff z' < z.
\end{equation}
Notice that for two points $(x,y,z,w), (x',y',z',w') \in \sigs_{0}$, we have $x=x'$ if and only if $z=z'$. Hence, a signature $(x,y,z,w) \in \sigs_{0}$ satisfies~\eqref{property3} if and only if
\begin{equation} \label{property4}
\text{for every } \; \; (x',y',z',w') \in \sigs_{0},  \;\; x' > x \iff z' > z.
\end{equation}
It is clear that~\eqref{property1} and~\eqref{property2} imply~\eqref{property3}. To see the converse, suppose $(x,y,z,w) \in \sigs_{0}$ satisfies~\eqref{property3} and suppose $x_{1}$ is such that $1 \leq x_{1} < x$ but $z_{1}= z_{max}(x'_{1}) \geq z$. Let 
\[ x_{2} = x_{min}(z_{1}), z_{2} = z_{max}(x_{2}), x_{3}=x_{min}(z_{2}), z_{3}=x_{min}(x_{3}), \dots.\] Then
\[x > x_{1} \geq x_{2} \geq x_{3} \geq \cdots,\]
\[z \leq z_{1} \leq z_{2} \leq z_{3} \leq \cdots\]
are two bounded sequences and, therefore, eventually stabilize. Suppose for all sufficiently large $n$, $x_{n} = x^{*}$ and $z_{n} = z^{*}$. Then there is a signature $(x^{*}, y^{*}, z^{*},w^{*}) \in \sigs_{0}$ such that $x^{*} < x$ but $z^{*} \geq z$. This is a contradiction. Therefore, we conclude that~\eqref{property3} implies~\eqref{property1}. 
Hence only the regions corresponding to signatures in the set 
\[\sigs_{1}:= \{(x,y,z,w) \in \sigs_{0} \colon (x,y,z,w) \text{ satisfies } \eqref{property3}\}\]
are candidates for regions in \acpl\ which are  unbounded to the southeast. The next result completely characterizes which ones among these are unbounded to the SE. 

\begin{theorem} \label{SE}
Suppose $(x, y, z, w) \in \sigs_{1}$, $x>1$, $z < z_{max}$ is such that $R = \cone(x, y, z, w) \cap \acpl \neq \emptyset$.  Then $R$ is bounded to the southeast if and only if there exist $(x', y', z',w'), (x'',y'',z'',w'') \in\sigs_{0}$ with  $x' < x < x'' $ (equivalently, $z' < z < z'' $) such that 
\begin{equation} \frac{x - x'}{z - z'} > 
\frac{x - x''}{z - z''}.
\label{bc}
\end{equation}
\end{theorem}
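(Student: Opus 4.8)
The plan is to work with the recession cone of $R$ in the $(a,c)$-plane and reduce the resulting one-parameter slope condition to a comparison among the points of $\sigs_{0}$ that straddle $(x,z)$. First I would make the defining inequalities of $R$ explicit: since $R=\cone(x,y,z,w)\cap\acpl\neq\emptyset$ and $\sigs$ is finite, $R$ is the nonempty polyhedron in the $(a_0,c_0)$-variables cut out by
\[ a_0(x-x') + c_0(z-z') + b_0(y-y') + (w-w') \le 0, \qquad (x',y',z',w')\in\sigs, \]
so its recession cone is $\{(d_a,d_c): d_a(x-x')+d_c(z-z')\le 0 \text{ for all } (x',y',z',w')\in\sigs\}$. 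Saying $R$ is unbounded to the SE means this cone meets the open fourth quadrant; I would record first that the pure directions $(1,0)$ and $(0,-1)$ are \emph{not} recession directions, because $x=x_{min}=0$ fails (here $x>1$) and $z=z_{max}$ fails (here $z<z_{max}$). Hence $R$ is unbounded to the SE exactly when $(1,-m)$ is a recession direction for some $m>0$.

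Next I would turn this into bounds on $m$. For a signature with $x'<x$, \eqref{property1} gives $z'<z$, so the constraint reads $m\ge \frac{x-x'}{z-z'}$; for $x'=x$ (whence $z'\le z$) and for $x'>x,\ z'\le z$ it holds automatically; and the remaining signatures, which by \eqref{property2} are exactly those with $z'>z$ (forcing $x'>x$), give $m\le \frac{x'-x}{z'-z}$. Writing $L=\max\{\frac{x-x'}{z-z'}: x'<x\}$ and $U=\min\{\frac{x'-x}{z'-z}: z'>z\}$, both over $\sigs$ and both over nonempty sets (the empty structure gives $x'=0<x$, and $z<z_{max}$ gives a signature with $z'>z$), a direction $(1,-m)$ with $m>0$ is recession iff $L\le m\le U$. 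Thus $R$ is bounded to the SE iff $L>U$.

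The crux, and the step I expect to be the main obstacle, is to show that $L$ and $U$ are attained at points of $\sigs_{0}$. Both extrema are attained because $\sigs$ is finite, and I would then argue by exchange. At a maximizer for $L$ (which has $x'<x$, hence $z'<z$), if $z'<z_{max}(x')$ I replace $z'$ by $z_{max}(x')$, strictly shrinking the positive denominator, and if $x'>x_{min}(z')$ I replace $x'$ by $x_{min}(z')\le x'<x$, strictly enlarging the positive numerator; either move strictly increases the ratio, contradicting maximality, so the maximizer lies in $\sigs_{0}$. The dual exchange localizes the minimizer for $U$ to $\sigs_{0}$, where the essential sign check is that after replacing $x'$ by $x_{min}(z')$ the first coordinate stays strictly above $x$, guaranteed by \eqref{property2} since $z'>z$. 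The delicate part here is keeping every numerator and denominator of a fixed sign throughout, so that the inequalities never flip; this bookkeeping is driven entirely by \eqref{property1}--\eqref{property2}.

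Finally I would assemble the equivalence. With the maximizer $(x',y',z',w')$ and minimizer $(x'',y'',z'',w'')$ now in $\sigs_{0}$ and straddling $(x,z)$, the inequality $L>U$ reads $\frac{x-x'}{z-z'}>\frac{x''-x}{z''-z}$, which is \eqref{bc} after rewriting $\frac{x''-x}{z''-z}=\frac{x-x''}{z-z''}$; conversely, any pair in $\sigs_{0}$ satisfying \eqref{bc} forces $L\ge\frac{x-x'}{z-z'}>\frac{x-x''}{z-z''}\ge U$, hence $L>U$ and $R$ bounded. The parenthetical equivalence $x'<x<x''\Leftrightarrow z'<z<z''$ is exactly \eqref{property3}--\eqref{property4} applied to $(x,y,z,w)\in\sigs_{1}$, which completes the characterization.
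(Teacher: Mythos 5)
Your proof is correct and follows essentially the same route as the paper's: both reduce SE-unboundedness of $R$ to the existence of a slope $m>0$ sandwiched between the ratios $\frac{x-x'}{z-z'}$ over signatures with $z'<z$ and $\frac{x-x''}{z-z''}$ over signatures with $z''>z$, and both use the $x_{min}/z_{max}$ exchange together with \eqref{property1}--\eqref{property2} to localize the extremal ratios to $\sigs_{0}$. The only difference is organizational: where you perform a single exchange at a global maximizer/minimizer of $L$ and $U$ and conclude by extremality, the paper iterates the exchange on an arbitrary violating signature until the resulting monotone bounded sequences stabilize at a point of $\sigs_{0}$; the underlying monotonicity of the ratio is identical in both arguments.
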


\begin{proof}
Suppose $R$ is unbounded to the southeast and contains the ray $(t, 0, -mt, 0), t\geq 0$ for some $m > 0$. Let $\sv{v_{0}}=(x_{0},y_{0},z_{0},w_{0})$. Then  $(1, 0,-m,0) \cdot (\sv{v} -\sv{v_{0}}) \leq 0$ for every $\sv{v_{0}} \in \sigs$, which implies that $\frac{x-x'}{z-z'} \leq m$ for every $(x',y',z',w') \in \sigs$ with $z' < z$ and $ m \leq \frac{x-x''}{z-z''}$ for every $(x'',y'',z'',w'') \in \sigs$ with $z'' > z$. 

Converesely, suppose there are no two points $(x', y', z',w'), (x'',y'',z'',w'') \in\sigs_{0}$ with  $x' < x < x'' $ (equivalently, $z' < z < z'' $) such that~\eqref{bc} holds. Let $m > 0$ be such that
\begin{equation} \max\left\{ \frac{x-x'}{z-z'}: (x',y',z',w') \in \sigs_{0}, z' < z\right\} \leq m \leq \min\left\{ \frac{x-x''}{z-z''}: (x'',y'',z'',w'') \in \sigs_{0}, z'' > z\right\}.
\label{bndc}
\end{equation}
The parameter $m $ can be chosen to be positive because all the fractions in the sets in~\eqref{bndc} are positive. We claim that the region $R$ contains the ray $(t, 0, -mt, 0)$ and hence is unbounded to the southeast. Suppose this is not the case. Then there is $(x_{0}, y_{0}, z_{0},w_{0}) \in \sigs$ such that 
\[ (x-x_{0}) - m (z- z_{0}) > 0.\]
We first consider the case $z_{0}<z$. Then $m >0$ implies $x > x_{0}$. Let \[ x_{1} = x_{min}(z_{0}), z_{1} = z_{max}(x_{1}), x_{2}=x_{min}(z_{1}), z_{2}=z_{max}(x_{2}), \dots.\] This way we get two monotone sequences:
\[x > x_{0} \geq x_{1} \geq x_{2} \geq \cdots \]
\[ z_{0} \leq z_{1}  \leq z_{2} \leq \cdots. \] 
which, since they are bounded, eventually stabilize. Suppose $x_{n}=x^{*}$, $z_{n}=z^{*}$ for all $n \geq n_{0}$ for some $n_{0} \in \mathbb{N}$. Then there is a signature $(x^{*}, y^{*},z^{*},w^{*}) \in \sigs_{0}$ for some $y^{*},w^{*}$. Since $(x, y, z, w) \in \sigs_{1}$, $x^{*} <x$ implies $z^{*} < z$ and, consequently, $z_{n} \leq z$ for all $n \in \mathbb{N}$. Moreover, by construction,
\[ 0 < m < \frac{x-x_{0}}{z-z_{0}} \leq \frac{x-x_{1}}{z-z_{1}} \leq \cdots \leq \frac{x-x^{*}}{z-z^{*}},\] which contradicts~\eqref{bndc}. The case when $z_{0}>z$   leads to a similar contradiction, while $z_{0}=z$ is clearly impossible.
\end{proof}

The proof of Theorem~\ref{SE} also gives a criterion for determining whether for $(x, y, z, w) \in \sigs_{1}$ the unbounded region $R = \cone(x, y, z, w) \cap \acpl$ is an unbounded stripe. Namely $R$ is a stripe if and only if 
\[\max\left\{ \frac{x-x'}{z-z'}: (x',y',z',w') \in \sigs_{0}, z' < z\right\} = \min\left\{ \frac{x-x''}{z-z''}: (x'',y'',z'',w'') \in \sigs_{0}, z'' > z\right\}.\]

Another property that we have observed for the RNA sequences is that 
\begin{equation} \label{1.8} z_{max}(x) \leq z_{max}(x-1) +2 \;\; \text{ for} \; \;  2 \leq z \leq z_{max}.\end{equation}
We know that this is not true for all sequences. For example, for the sequence \begin{equation}\label{seq} \text{ ACCCCGACCCUUUUCCCAGCCCCA }\end{equation} we have $z_{max}(2) \geq 6$ but $z_{max}(1) =3$. However, notice that if the rooted tree corresponding to the structure with signature $(x,y,z,w)$ has depth more than 1, there are two branching points separated by a stem and breaking the base pairs in that stem produces a structure with $x-1$ branching points and $z-2$ branches. Therefore, a counterexample would have to satisfy the property that for some $x > 1$, all structures with $x$ branching points and $z_{max}(x)$  branches do not have a path between the branching points that does not involve the root. The number of such type of structures is limited because of the possibility
of alternative configurations, but a detailed discussion (and certainly proof) involve a different approach, so won't be given here.

Suppose~\eqref{1.8} is satisfied. Then $z_{max} \leq z_{max}(1) + 2(x-1)$ for $1 \geq x \geq x_{max}$. Let $(x,y,z,w) \in \sigs_{1}$ be such that $R = \cone(x, y, z, w) \cap \acpl \neq \emptyset$ and $z=z_{max}(x) = z_{max}(1) + 2(x-1)$. Then for $(x',y',z',w') \in \sigs_{0}$, since $z' \leq z_{max}(1) + 2(x'-1)$, we have $z-z' \geq 2(x-x')$ which implies 
\[ \max\left\{ \frac{x-x'}{z-z'}: (x',y',z',w') \in \sigs_{0}, z' < z\right\} \leq \frac{1}{2} \leq \min\left\{ \frac{x-x''}{z-z''}: (x'',y'',z'',w'') \in \sigs_{0}, z'' > z\right\}\]
and, therefore, by Theorem~\ref{SE}, $R$ is unbounded. This explains the arithmetic progression with step 2 in the $z$ coordinates that we see in Figure~\ref{fig: figexample} when we traverse the unbounded regions clockwise starting from $(0,0,0,w_{0})$.

\section{Conclusion}
\label{conclusion}
%
%

We have shown that for each sequence, under certain conditions which are empirically true for naturally occurring RNA sequences, the structures with minimal and maximal number of branches for a given number of branching points are optimal with high probability, when the parameter $b$ which penalizes for single stranded nucleotides in the multibranch loops is kept constant. This was done via a complete characterization of the unbounded regions in the $\acpl$ section of the normal fan of the branching polytope. While not all maximal branching structures correspond to unbounded regions, we have completely characterized those that do, and shown that this really depends on the combinatorics of the possible pairings for the sequence, not on the energy of the other motifs in those structures.   

Some of our results depend on assumptions which we have observed to be true for branching polytopes of RNA sequences. We believe that these assumptions need not be true for all sequences over the four letter alphabet, but that the counterexamples would be pathological, like the sequence~\eqref{seq} which is very short and palindromic, for instance. In another case, for the precise condition in Theorem~\ref{SE}, we had to introduce the set $\sigs_{1}$ which is determined by the technical condition~\ref{property3}. However, for the branching polytopes we have computed we have observed that \[x_{1} < x_{2} \implies z_{max}(x_{1}) \leq z_{max}(x_{2})\] and that there is a structure with $x$ branching points and $z$ branches for every $x_{min} \leq x \leq x_{max}$, $z_{min}(x) \leq z \leq z_{max}(x)$. These conditions together imply~\eqref{property3} which means that in practice $\sigs_{1}=\sigs_{0}$.  We expect that a counterexample would also be a pathological sequence. Therefore, the following is natural to ask: \emph{Can our assumptions be mathematically justified?  Is there a reasonable probability distribution of sequences under which Observations~\ref{observmax} and~\ref{observmin} hold?}

As a consequence of out descriptions of the vertices that correspond to the unbounded regions in $\acpl$, we can conclude that the secondary structures that are biologically reasonable have signatures that correspond to bounded regions, where the optimization is less stable under the change of branching parameters. To look at improving the average prediction accuracy, therefore, one would need to consider structures that are approximately correct. The accuracy, stability, and robustness are analyzed in~\cite{inprep}.

\bibliographystyle{plain}
\bibliography{polytopes.bib}


\end{document}